\newcommand{\kmismatches}{\overset{k}{\rule{0pt}{.4ex}\smash{\sim}}}
\newcommand{\mismatches}[1]{\overset{#1}{\sim}}
\newcommand{\bigo}{\mathcal{O}}
\newcommand{\pred}{\mathrm{pred}}
\newcommand{\ham}{\mathrm{H}}
\newcommand{\polylogws}{\text{\hspace{1mm}polylog}}
\title{Order-preserving pattern matching\\ with $k$ mismatches}
\begin{document}
\author{Paweł Gawrychowski\inst{1} and Przemysław Uznański\inst{2}\thanks{This work was started while the second author was a PhD student at Inria Bordeaux Sud-Ouest, France.}}
\institute{
Max-Planck-Institut f\"{u}r Informatik, Saarbr\"ucken, Germany
\and
LIF, CNRS and Aix-Marseille Universit\'e, Marseille, France}

\maketitle

\begin{abstract}
We study a generalization of the order-preserving pattern matching recently introduced by Kubica~et~al. (Inf. Process. Let., 2013) and Kim~et~al. (submitted to Theor. Comp. Sci.), where instead of looking for an exact copy of the pattern, we only require that the relative order between the elements is the same. In our variant, we additionally allow up to $k$ mismatches between the pattern of length $m$ and the text of length $n$, and the goal is to construct an efficient
algorithm for small values of $k$. Our solution detects an order-preserving occurrence with up to $k$ mismatches in $\bigo(n(\log\log m+k\log\log k))$ time.
\end{abstract}

%\begin{bibunit}[splncs03]

\section{Introduction}

\emph{Order-preserving pattern matching}, recently introduced in~\cite{Kim} and~\cite{KubicaOrder}, and further considered in~\cite{CrochemoreOrder}, is a variant of the well-known pattern matching problem, where instead of looking for a fragment of the text which is identical to the given pattern, we are interested in locating a fragment which is order-isomorphic with the pattern. Two sequences over integer alphabet are \emph{order-isomorphic} if the relative order between any two elements at the same positions in both sequences is the same. Similar problems have been extensively studied in a slightly different setting, where instead of a fragment, we are interested in a (not necessarily contiguous) subsequence. For instance, pattern avoidance in permutations was of much interest.

For the order-preserving pattern matching, both~\cite{Kim} and~\cite{KubicaOrder} present an $\bigo(n+m\log m)$ time algorithm, where $n$ is the length of the text, and $m$ is the length of the pattern. Actually, the solution given by~\cite{KubicaOrder} works in $\bigo(n+\text{sort}(m))$ time, where $\text{sort}(m)$ is the time required to sort a sequence of $m$ numbers. Furthermore, efficient algorithms for the version with multiple patterns are known~\cite{CrochemoreOrder}. Also, a generalization of suffix trees in the order-preserving setting was recently considered~\cite{CrochemoreOrder}, and the question of constructing a forward automaton allowing efficient pattern matching and developing an average-case optimal pattern matching algorithm was studied~\cite{Vialette}.

Given that the complexity of the exact order-preserving pattern matching seems to be already settled, a natural direction is to consider its approximate version. Such direction was successfully investigated for the related case of \emph{parametrized pattern matching} in~\cite{MosheParam}, where an $\bigo(nk^{1.5}+mk\log m)$ time algorithm was given for parametrized matching with $k$ mismatches.

We consider a relaxation of order-preserving pattern matching, which we call \emph{order-preserving pattern matching with $k$ mismatches}. Instead of requiring that the fragment we seek is order-isomorphic with the pattern, we are allowed to first remove $k$ elements at the corresponding positions from the fragment and the pattern, and then check if the remaining two sequences are order-isomorphic. In such setting, it is relatively straightforward to achieve running time of $\bigo(nm\log\log m)$, where $n$ is the length of the text, and $m$ is the length of the pattern. Such complexity might be unacceptable for long patterns, though, and we aim to achieve complexity of the form $\bigo(n f(k))$. In other words, we would like our running time to be close to linear if the bound on the number of mismatches is very small. We construct a deterministic algorithm with $\bigo(n(\log\log m+k\log\log k))$ running time. At a very high level, our solution is similar to the one given in~\cite{MosheParam}. We show how to filter the possible starting positions so that a position is either eliminated in $\bigo(f(k))$ time, or the structure of the fragment starting there is simple, and we can verify the occurrence in $\bigo(f(k))$ time. The details are quite different in our setting, though.

A different variant of approximate order-preserving pattern matching could be that we allow to remove $k$ elements from the fragment, and $k$ elements from the pattern, but don't require that they are at the same positions. Then we get order-preserving pattern matching with $k$ errors. Unfortunately, such modification seems difficult to solve in polynomial time: even if the only allowed operation is removing $k$ elements from the fragment, the problem becomes NP-complete~\cite{Bose}.

\section{Overview of the algorithm}
Given a text $(t_{1},\ldots,t_{n})$ and a pattern $(p_{1},\ldots,p_{m})$, we want to locate an order-preserving occurrence with at most $k$ mismatches of the pattern in the text. Such occurrence is a fragment $(t_{i},\ldots,t_{i+m-1})$ with the property that if we ignore the elements at some up to corresponding $k$ positions in the fragment and the pattern, the relative order of the remaining elements is the same in both of them.

The above definition of the problem is not very convenient to work with, hence we start with characterising $k$-isomorphic sequences using the language of subsequences in Lemma~\ref{lemma:subseq}. This will be useful in some of the further proofs and also gives us a polynomial time solution for the problem, which simply considers every possible $i$ separately. To improve on this naive solution, we need a way of quickly eliminating some of these starting positions. For this we define the signature $S(a_{1},\ldots,a_{m})$ of a sequence $(a_{1},\ldots,a_{m})$, and show in Lemma~\ref{lemma:implication} that the Hamming distance between the signatures of two $k$-isomorphic sequences cannot be too large. Hence such distance between $S(t_{i},\ldots,t_{i+m-1})$ and $S(p_{1},\ldots,p_{m})$ can be used to filter some starting positions where a match cannot happen.

In order to make the filtering efficient, we need to maintain $S(t_{i},\ldots,t_{i+m-1})$ as we increase $i$, i.e., move a window of length $m$ through the text. For this we first provide in Lemma~\ref{lemma:prune_structure} a data structure which, for a fixed word, allows us to maintain a word of a similar length under changing the letters, so that we can quickly generate the first $k$ mismatches between subwords of the current and the fixed word. The structure is based on representing the current word as a concatenation of subwords of the fixed word. Then we observe that increasing $i$ by one changes the current signature only slightly, which allows us to apply the aforementioned structure to maintain $S(t_{i},\ldots,t_{i+m-1})$ as shown in Lemma~\ref{lemma:prune}. Therefore we can efficiently eliminate all starting positions for which the Hamming distance between the signatures is too large.

For all the remaining starting positions, we reduce the problem to computing the heaviest increasing subsequence, which is a weighted version of the well-known longest increasing subsequence, in Lemma~\ref{lemma:reduction}. The time taken by the reduction depends on the Hamming distance, which is small as otherwise the position would be eliminated in the previous step. Finally, such weighted version of the longest increasing subsequence can be solved efficiently as shown in Lemma~\ref{lemma:heaviest}. Altogether these results give an algorithm for order-preserving pattern matching with $k$
with the cost of processing a single $i$ depending mostly on $k$.

An implicit assumption in this solution is that there are no repeated elements in neither the text nor the pattern. In the last part of the paper we remove this assumption while keeping the same time complexity. At a high level the algorithm remains the same, but a few carefully chosen modifications are necessary. First we further generalize the heaviest increasing subsequence into heaviest chain in a plane, and in Lemma~\ref{lemma:heaviest chain} how to solve this version efficiently. Then we modify the definition of a signature, and prove in Lemma~\ref{lemma:reduction2} that after such change checking if two sequences are order-isomorphic can be reduced to computing the heaviest chain. 

\section{Preliminaries}

We consider strings over an integer alphabet, or in other words sequences of integers. Two such sequences are \emph{order-isomorphic} (or simply \emph{isomorphic}), denoted by $(a_{1},\ldots,a_{m})\sim (b_{1},\ldots,b_{m})$, when $a_{i} \leq a_{j}$ iff $b_{i} \leq b_{j}$ for all $i,j$. 
We will also use the usual equality of strings. Whenever we are talking about sequences, we are interested in the relative order between their elements, and whenever we are talking about strings consisting of characters, the equality of elements is of interest to us. For two strings $s$ and $t$, their \emph{Hamming distance} $\ham(s,t)$ is simply the number of positions where the corresponding characters differ.

%Pawel: the definition is as in the previous papers. As opposed to what we had, they allow equality. My bet is that our solution generalizes.
Given a text $(t_{1},\ldots,t_{n})$ and a pattern $(p_{1},\ldots,p_{m})$, the \emph{order-preserving pattern matching} problem is to find $i$ such that
$(t_{i},\ldots,t_{i+m-1}) \sim (p_{1},\ldots,p_{m})$.  We consider its approximate version, i.e., order-preserving pattern matching with $k$ mismatches. 

We define two sequences \emph{order-isomorphic with $k$ mismatches}, denoted by 
$(a_{1},\ldots,a_{m})\kmismatches (b_{1},\ldots,b_{m})$, when we can select (up to) $k$ indices $1\leq i_{1} < \ldots < i_{k}\leq m$, and remove the corresponding elements from both sequences so that the resulting two new sequences are isomorphic, i.e., $a_{j} \leq a_{j'}$ iff $b_{j} \leq b_{j'}$ for any $j,j' \notin\{i_{1},\ldots,i_{k}\}$. In \emph{order-preserving pattern matching with $k$ mismatches} we want $i$ such that $(t_{i},\ldots,t_{i+m-1}) \kmismatches (p_{1},\ldots,p_{m})$, see Fig.~\ref{fig:occurrence}.

\begin{figure}[t]
\begin{center}
\includegraphics[width=0.8\linewidth]{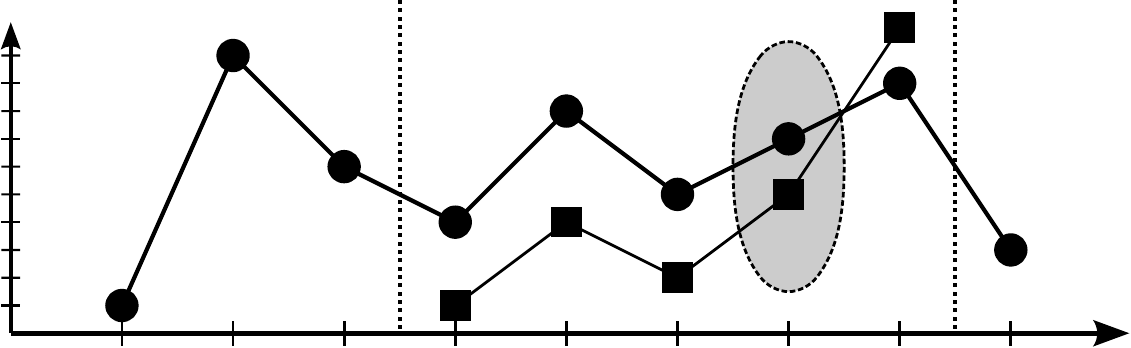}
\end{center}
\caption{$[1,4,2,5,11]$ occurs in $[1,10,6,4,8,5,7,9,3]$ (at position 4) with $1$ mismatch.}
\label{fig:occurrence}
\end{figure}

Our solution works in the word RAM model, where $n$ integers can be sorted in $\bigo(n\log\log n)$ time~\cite{Han}, and we can implement dynamic dictionaries using van Emde Boas trees. In the restricted comparison model, where we can only compare the integers, all $\log\log$ in our complexities increase to $\log$.

In Section~\ref{section:algorithm}, we assume that integers in any sequence are all distinct. Such assumption was already made in one of the papers introducing the problem~\cite{Kim}, with a justification that we can always perturb the input to ensure this (or, more precisely, we can consider pairs consisting of a number and its position). In some cases this can change the answer, though\footnote{More precisely, it might make two non-isomorphic sequences isomorphic, but not the other way around.}. Nevertheless, using a more complicated argument, shown in Section~\ref{section:generalization}, we can generalize our solution to allow the numbers to repeat. Another simplifying assumption that we make in designing our algorithm is that $n\leq 2m$. We can do so using a standard trick of cutting the text into overlapping fragments of length $2m$ and running the algorithm on each such fragment separately, which preserves all possible occurrences.

%\begin{proposition}
%\label{proposition:reduction}
%An instance of order-preserving pattern matching with $k$ mismatches where the text is $(t_{1},\ldots,t_{n})$ and the pattern is $(p_{1},\ldots,p_{m})$ can be reduced to %$\lceil\frac{n}{m}\rceil$ instances
%with the same pattern, where all texts are of length at most $2m$. More precisely, the texts are of the form $(t_{\alpha m + 1},t_{\alpha m+2},\ldots,t_{\alpha m+2m})$, for %$\alpha=0,1,..,\lfloor \frac{n}{m} \rfloor$.
%\end{proposition}

\section{The algorithm}
\label{section:algorithm}

First we translate $k$-isomorphism into the language of subsequences.

\begin{lemma}
\label{lemma:subseq}
$(a_{1},\ldots,a_{m}) \kmismatches (b_{1},\ldots,b_{m})$ iff there exist $i_1,\ldots,i_{m-k}$ such that $a_{i_1} < \ldots < a_{i_{m-k}}$ and $b_{i_1} < \ldots < b_{i_{m-k}}$.
\end{lemma}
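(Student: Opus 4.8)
The plan is to prove both implications directly from the definitions, using the characterization of order-isomorphism in terms of ordered pairs. Since all elements are distinct (the standing assumption in Section~\ref{section:algorithm}), two equal-length sequences are isomorphic precisely when they induce the same permutation, equivalently when for every pair of positions the direction of the inequality agrees; and a sequence of distinct integers restricted to an increasing set of positions is increasing iff those positions, read left to right, give increasing values.

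For the forward direction, suppose $(a_1,\ldots,a_m)\kmismatches(b_1,\ldots,b_m)$. By definition there is a set $R$ of at most $k$ indices such that, writing $K=\{1,\ldots,m\}\setminus R$, we have $a_j\le a_{j'}\iff b_j\le b_{j'}$ for all $j,j'\in K$. Let $j_1<\ldots<j_{m-|R|}$ be the elements of $K$ in increasing order of \emph{position}, and then reorder them so that the $a$-values are increasing: that is, let $i_1,\ldots,i_{m-|R|}$ be the elements of $K$ listed so that $a_{i_1}<\ldots<a_{i_{m-|R|}}$ (possible since the $a$'s are distinct). By the isomorphism on $K$, the same permutation sorts the $b$-values, so $b_{i_1}<\ldots<b_{i_{m-|R|}}$ as well. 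Since $|R|\le k$ we have $m-|R|\ge m-k$; truncating to the first $m-k$ of these indices (or padding is unnecessary) yields $m-k$ indices with the required property. One subtlety to handle carefully: the $i_\ell$ produced this way need not be increasing as integers, whereas the statement writes $i_1,\ldots,i_{m-k}$ without an ordering requirement — I would note explicitly that here the $i_\ell$ are just a list of distinct positions, not required to be sorted, so there is no conflict.

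For the converse, suppose there are positions $i_1,\ldots,i_{m-k}$ (distinct, since the $a$-values along them are strictly increasing) with $a_{i_1}<\ldots<a_{i_{m-k}}$ and $b_{i_1}<\ldots<b_{i_{m-k}}$. Let $K=\{i_1,\ldots,i_{m-k}\}$ and $R=\{1,\ldots,m\}\setminus K$, so $|R|=k$. For any two $j,j'\in K$, say $j=i_\ell$ and $j'=i_{\ell'}$, we have $a_j\le a_{j'}\iff a_{i_\ell}\le a_{i_{\ell'}}\iff \ell\le\ell'\iff b_{i_\ell}\le b_{i_{\ell'}}\iff b_j\le b_{j'}$, using strict monotonicity of both increasing chains. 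Hence the subsequences obtained by deleting the $k$ positions in $R$ are order-isomorphic, which is exactly $(a_1,\ldots,a_m)\kmismatches(b_1,\ldots,b_m)$.

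I expect the only real care needed — rather than a genuine obstacle — is bookkeeping around "up to $k$": in the definition of $\kmismatches$ one removes \emph{at most} $k$ elements, so in the forward direction $|R|$ could be strictly less than $k$ and we get \emph{more} than $m-k$ good indices, from which we simply keep $m-k$; and in the converse, having exactly $m-k$ good positions forces removing exactly $k$, which still satisfies "at most $k$". Making this matching between the two formulations precise, together with the remark that the $i_\ell$ in the statement are an unordered list of positions, is the one place where the write-up must be explicit.
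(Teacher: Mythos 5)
Your proposal is correct and follows essentially the same route as the paper: take the complement of the removed index set, order it so the $a$-values increase, and observe the $b$-values then increase too, with the converse obtained by taking the complement of $\{i_1,\ldots,i_{m-k}\}$ as the removal set. The extra bookkeeping you flag (the $i_\ell$ being an unordered list of positions, and ``up to $k$'' versus exactly $k$) is handled implicitly in the paper but is worth making explicit, as you do.
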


\begin{proof}
\textcircled{$\Rightarrow$}
If the sequences are $k$-isomorphic, according to the definition there exists a set of $k$ indices $j_1,j_2,\ldots,j_k$ such that
$\forall_{j,j' \not\in \{j_1,\ldots,j_k\}} a_j < a_{j'} \text{ iff } b_j < b_{j'}$.
We set $\{i_1,\ldots,i_{m-k}\} = \{1,\ldots,m\} \setminus \{j_1,\ldots,j_k\}$ and order its elements so that
$a_{i_1} < \ldots < a_{i_{m-k}}$. Then clearly $b_{i_1} < \ldots < b_{i_{m-k}}$ as well.

\textcircled{$\Leftarrow$}
From the existence of $i_1,\ldots,i_{m-k}$ such that $a_{i_1} < \ldots < a_{i_{m-k}}$ and $b_{i_1} < \ldots < b_{i_{m-k}}$ we deduce that if we choose $\{j_1,\ldots,j_k\} = \{1,\ldots,m\} \setminus \{i_1,\ldots,i_{m-k}\}$, we have that $\forall_{j,j' \not\in \{j_1,\ldots,j_k\}} a_j < a_{j'} \text{ iff } b_j < b_{j'}$.
\qed
\end{proof}

The above lemma implies an inductive interpretation of $k$-isomorphism useful in further proofs and a fast method for testing $k$-isomorphism.

\begin{proposition}
\label{prop:inductive}
If $(a_{1},\ldots,a_{m}) \mismatches{k+1} (b_{1},\ldots,b_{m})$ then there exists $(a'_{1},\ldots,a'_{m})$ such that $(a_{1},\ldots,a_{m}) \mismatches{1} (a'_{1},\ldots,a'_{m})$ and $(a'_{1},\ldots,a'_{m}) \kmismatches (b_{1},\ldots,b_{m})$.
\end{proposition}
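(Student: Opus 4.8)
The plan is to argue directly from the definition of $\mismatches{k+1}$ rather than through Lemma~\ref{lemma:subseq}, and to obtain $(a'_{1},\ldots,a'_{m})$ by changing $(a_{1},\ldots,a_{m})$ in a single position. By definition there is a set $I$ of at most $k+1$ indices such that $a_{j}\le a_{j'}\Leftrightarrow b_{j}\le b_{j'}$ for all $j,j'\notin I$. If $|I|\le k$ there is nothing to do: take $(a'_{1},\ldots,a'_{m})=(a_{1},\ldots,a_{m})$, since $\mismatches{1}$ is reflexive (delete any one position from both copies) and $(a_{1},\ldots,a_{m})\kmismatches(b_{1},\ldots,b_{m})$ then holds by the same witness set $I$. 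So assume $|I|=k+1$ and fix an arbitrary $r\in I$. The idea is to ``put $r$ back'': we modify only the entry at position $r$ so that the new sequence agrees in order with $(b_{1},\ldots,b_{m})$ on the enlarged index set $(\{1,\ldots,m\}\setminus I)\cup\{r\}$, whose complement $I\setminus\{r\}$ has size $k$.

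Concretely, I would set $a'_{j}=a_{j}$ for every $j\ne r$, and choose $a'_{r}$ so that for every $j\notin I$ we have $a'_{r}<a_{j}\Leftrightarrow b_{r}<b_{j}$ (equivalently $a'_{r}>a_{j}\Leftrightarrow b_{r}>b_{j}$; recall all entries of $b$ are distinct, so for $j\ne r$ exactly one of the two cases occurs). Granting that such a value exists, both conclusions are immediate bookkeeping. First, $(a_{1},\ldots,a_{m})\mismatches{1}(a'_{1},\ldots,a'_{m})$: deleting position $r$ from both sequences leaves two identical, hence isomorphic, sequences. Second, $(a'_{1},\ldots,a'_{m})\kmismatches(b_{1},\ldots,b_{m})$: delete the $k$ positions of $I\setminus\{r\}$; for a surviving pair $j,j'\notin I$ the relation $a'_{j}\le a'_{j'}\Leftrightarrow b_{j}\le b_{j'}$ is inherited from the hypothesis because $a'$ coincides with $a$ there, and for a pair consisting of $r$ and some $j\notin I$ it holds by the choice of $a'_{r}$; these are all the pairs that need to be checked.

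The only substantive point, and the step I expect to carry all the content, is that a valid value $a'_{r}$ exists, i.e. that $\max\{a_{j}:j\notin I,\ b_{j}<b_{r}\}<\min\{a_{j}:j\notin I,\ b_{j}>b_{r}\}$, where an empty set is read as $-\infty$ resp.\ $+\infty$. This is exactly where the hypothesis is used: if $j,j'\notin I$ satisfy $b_{j}<b_{r}<b_{j'}$, then $b_{j}<b_{j'}$, and since $a$ and $b$ agree in order on $\{1,\ldots,m\}\setminus I$ this forces $a_{j}<a_{j'}$; hence every element of the first set lies strictly below every element of the second, and any value strictly between them (and, if one wishes $a'$ to remain integer-valued with pairwise distinct entries, avoiding the finitely many values $a_{j}$, which is possible after a harmless rescaling of the input) can serve as $a'_{r}$. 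Everything else in the argument is just unwinding the definitions of $\mismatches{1}$ and $\kmismatches$.
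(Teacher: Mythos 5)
Your proof is correct. The paper actually offers no explicit proof of this proposition---it is merely asserted to follow from Lemma~\ref{lemma:subseq}---so your direct argument from the definition is a welcome filling-in of details rather than a departure. The construction you give (replace the single entry $a_{r}$ by a value that occupies, relative to $\{a_{j}:j\notin I\}$, the same order-slot that $b_{r}$ occupies relative to $\{b_{j}:j\notin I\}$) is precisely the content one would extract by routing through Lemma~\ref{lemma:subseq}: there the common increasing subsequence on $m-k-1$ indices is extended by one index $r$ after adjusting $a_{r}$. You correctly identify and discharge the one substantive point, namely that $\max\{a_{j}:j\notin I,\ b_{j}<b_{r}\}<\min\{a_{j}:j\notin I,\ b_{j}>b_{r}\}$, which follows from order-agreement of $a$ and $b$ off $I$; and your handling of the degenerate case $|I|\leq k$ via reflexivity of $\mismatches{1}$ (``up to'' $k$ indices) is valid. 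The aside about keeping $a'_{r}$ integer-valued is a harmless technicality, since only the relative order of the entries matters for $\kmismatches$ and a rescaling (or passing to rationals) does not affect it; note also that your use of distinctness of the entries of $b$ is legitimate here, as the proposition lives in Section~\ref{section:algorithm} where that assumption is in force.
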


\begin{lemma}
\label{lemma:checking}
$(a_{1},\ldots,a_{m}) \kmismatches (b_{1},\ldots,b_{m})$ can be checked in time $\bigo(m \log\log m)$.
\end{lemma}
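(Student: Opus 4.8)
The plan is to reduce the test for $k$-isomorphism to a longest increasing subsequence (LIS) computation and then observe that LIS can be computed fast enough in the word RAM model. By Lemma~\ref{lemma:subseq}, the sequences $(a_1,\ldots,a_m)$ and $(b_1,\ldots,b_m)$ are order-isomorphic with $k$ mismatches if and only if there exist indices $i_1 < \cdots$ wait — the indices need not be increasing as written, but we can always reorder them; what matters is the existence of a set of $m-k$ indices on which both sequences are simultaneously increasing when the indices are taken in a common order. The clean way to phrase this: we want a subset $I \subseteq \{1,\ldots,m\}$ of size $m-k$ and an ordering of $I$ under which both $(a_j)_{j\in I}$ and $(b_j)_{j\in I}$ are strictly increasing. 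First I would argue that the longest such common increasing arrangement is exactly the length of the LIS of the permutation that maps the rank of $a_j$ to the rank of $b_j$. Concretely, since all elements of $(a_1,\ldots,a_m)$ are distinct (Section~\ref{section:algorithm}'s standing assumption) and likewise for $(b_1,\ldots,b_m)$, define $\pi$ to be the permutation of $\{1,\ldots,m\}$ obtained by sorting the pairs $(a_j,b_j)$ by their first coordinate and reading off the rank of the second coordinate. A set of indices on which both sequences increase corresponds precisely to an increasing subsequence of $\pi$, so the maximum size of such a set is the length $L$ of the LIS of $\pi$, and $(a_1,\ldots,a_m) \kmismatches (b_1,\ldots,b_m)$ holds iff $L \geq m-k$.

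Second I would spell out the computation. Computing $\pi$ requires sorting $(a_1,\ldots,a_m)$ together with $(b_1,\ldots,b_m)$; in the word RAM model this takes $\bigo(m\log\log m)$ time by the cited result of Han~\cite{Han}. Given $\pi$, the classical patience-sorting algorithm computes the length of its LIS by scanning left to right and, for each element, performing one predecessor/successor query and one update in a dynamic dictionary of ``pile tops''. Using van Emde Boas trees over the universe $\{1,\ldots,m\}$ — available in our model, as noted in the Preliminaries — each of the $m$ steps costs $\bigo(\log\log m)$, for a total of $\bigo(m\log\log m)$. Comparing the resulting $L$ against $m-k$ finishes the check. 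Adding up, the whole procedure runs in $\bigo(m\log\log m)$ time, as claimed. (In the restricted comparison model one replaces van Emde Boas trees by balanced BSTs and the bound becomes $\bigo(m\log m)$, consistent with the remark in the Preliminaries that all $\log\log$'s degrade to $\log$.)

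The main thing to get right — and the only real content beyond invoking known tools — is the bijection between ``index sets on which both sequences are simultaneously increasing'' and ``increasing subsequences of $\pi$,'' together with the fact that distinctness of the elements makes $\pi$ genuinely a permutation so that ranks are well defined and the patience-sorting bound is clean. I would state this correspondence as a short self-contained claim and verify both directions: an increasing subsequence of $\pi$ at positions $q_1 < \cdots < q_\ell$ (positions in sorted-by-$a$ order) yields indices whose $a$-values are increasing by construction and whose $b$-values are increasing because $\pi(q_1) < \cdots < \pi(q_\ell)$; conversely any common-increasing index set, listed in increasing order of its $a$-values, lands at positions in sorted order and its $b$-ranks are increasing, hence forms an increasing subsequence of $\pi$. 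Everything else is a routine appeal to Lemma~\ref{lemma:subseq}, Han's sorting bound, and the van Emde Boas dictionary.
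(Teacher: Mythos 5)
Your proposal is correct and follows essentially the same route as the paper: both sort by the $a$-values (Han's $\bigo(m\log\log m)$ sorting), read off the $b$-ranks to get a sequence whose longest increasing subsequence is compared against $m-k$ via Lemma~\ref{lemma:subseq}, and compute that LIS with a van Emde Boas structure in $\bigo(m\log\log m)$ total. Your extra care about the correspondence between common-increasing index sets and increasing subsequences is a welcome elaboration of what the paper leaves implicit, but it is not a different argument.
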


\begin{proof}
Let $\pi$ be the sorting permutation of $(a_{1},\ldots,a_{m})$. Such permutation can be found in time $\bigo(m \log\log m)$. Let $(b'_1,\ldots, b'_m)$ be a sequence defined 
by setting $b'_i := b_{\pi(i)}$. Then, by Lemma \ref{lemma:subseq}, $(a_1,\ldots,a_m) \kmismatches (b_1,\ldots,b_m)$ iff there exists an increasing subsequence of $b'$ of length $m-k$. Existence of such a subsequence can be checked in time $\bigo(m \log\log m)$ using a van Emde Boas tree~\cite{Hunt}.
\qed
\end{proof}

By applying the above lemma to each of the possible occurrences separately, we can already solve order-preserving pattern matching with $k$ mismatches in time $\bigo(n m \log\log m)$. However, our goal is to develop a faster $\bigo(nf(k))$ time algorithm. For this we cannot afford to verify every possible position using Lemma~\ref{lemma:checking}, and we need a closer look into the structure of the problem.

The first step is to define the {\it signature} of a sequence $(a_{1},\ldots,a_{m})$. Let $\pred(i)$ be the position where the predecessor of $a_i$ among $\{a_1,\ldots,a_m\}$ occurs in the sequence (or $0$, if $a_{i}$ is the smallest element). Then the signature $S(a_{1},\ldots,a_{m})$ is a new sequence $(1-\pred(1),\ldots,m-\pred(m))$ (a simpler version, where the new sequence is $(\pred(1),\ldots,\pred(m))$, was already used to solve the exact version). The signature clearly can be computed in time $\bigo(m\log\log m)$ by sorting. While looking at the signatures is not enough to determine if two sequences are $k$-isomorphic, in some cases it is enough to detect that they are not, as formalized below.

\begin{lemma}
\label{lemma:implication}
If $(a_{1},\ldots,a_{m}) \kmismatches (b_{1},\ldots,b_{m})$, then the Hamming distance between $S(a_{1},\ldots,a_{m})$ and $S(b_{1},\ldots,b_{m})$ is at most $3k$.
\end{lemma}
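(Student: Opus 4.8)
The plan is to use Proposition~\ref{prop:inductive} to reduce to the case $k=1$ and then argue that a single mismatch can change only a bounded number of signature positions. First I would verify the base case: suppose $(a_{1},\ldots,a_{m}) \mismatches{1} (b_{1},\ldots,b_{m})$, so there is one index, say $p$, whose removal makes the two sequences isomorphic. Removing position $p$ deletes $a_p$ from the value set $\{a_1,\ldots,a_m\}$; this affects $\pred(i)$ only for those $i$ whose predecessor among $\{a_1,\ldots,a_m\}$ was exactly $a_p$ — but in a sequence of distinct values each element is the predecessor of exactly one other element, so only one index $i$ has $\pred_a(i)=p$, and similarly only one index has $\pred_b(i)=p$ in the $b$-sequence. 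So on both the $a$ side and the $b$ side the signature entry is "disturbed" by the deletion at only $O(1)$ positions, and after the deletion the two restricted sequences are isomorphic hence have identical signatures (over their common value set of size $m-1$, up to the uniform index shift, which cancels in the difference $i-\pred(i)$). Reindexing to account for position $p$ being skipped introduces an off-by-one shift in all entries after $p$, but since the signature stores the \emph{difference} $i-\pred(i)$, a uniform shift of both $i$ and $\pred(i)$ by one leaves the entry unchanged as long as $\pred(i)$ is also after $p$; the only entries that can differ are those where position $p$ lies between $\pred(i)$ and $i$, i.e.\ where the predecessor link "jumps over" $p$. I need to count these carefully, but the point is that position $p$ can be jumped over on the $a$-side by only the at most one link that pointed at $p$ plus the at most one link that newly points past it, and the same on the $b$-side; combined with position $p$ itself this yields a constant — I expect the clean bound here to be $3$ mismatches for a single removal.

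Next, with the base case giving Hamming distance at most $3$ per unit of $k$, I would iterate: by Proposition~\ref{prop:inductive}, $(a_{1},\ldots,a_{m}) \kmismatches (b_{1},\ldots,b_{m})$ means there is a chain $(a_{1},\ldots,a_{m}) = c^{(0)} \mismatches{1} c^{(1)} \mismatches{1} \cdots \mismatches{1} c^{(k)}$ with $c^{(k)} \mismatches{0} (b_{1},\ldots,b_{m})$, i.e.\ $c^{(k)} \sim (b_{1},\ldots,b_{m})$ and hence $S(c^{(k)}) = S(b_{1},\ldots,b_{m})$. Then by the triangle inequality for Hamming distance, $\ham(S(a),S(b)) \le \sum_{j=1}^{k} \ham(S(c^{(j-1)}),S(c^{(j)})) \le 3k$.

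The step I expect to be the main obstacle is the base case bookkeeping — precisely pinning down which signature entries $i-\pred(i)$ can change when one element is removed and the sequence is reindexed. The subtlety is twofold: (1) removing the \emph{value} $a_p$ rerouting the predecessor pointer that pointed to it, and (2) removing the \emph{position} $p$, which shifts the indices of everything to its right and thereby can change $i - \pred(i)$ for any link straddling position $p$. One has to confirm that straddling links are not numerous: before deletion, for each of the two sequences, consider the link into $p$ and the links straddling $p$; an exchange argument (the deleted value $a_p$ sits at a unique rank, so rerouting is local) should show only $O(1)$ entries are touched on each side, and I would be careful to make the final constant honestly $3$ rather than, say, $4$, by checking whether the "straddling" contributions on the two sides can be charged to the same positions. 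Everything else is a routine application of the triangle inequality and Proposition~\ref{prop:inductive}.
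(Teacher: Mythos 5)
Your overall skeleton --- iterating Proposition~\ref{prop:inductive} to get a chain $a = c^{(0)} \mismatches{1} c^{(1)} \mismatches{1} \cdots \mismatches{1} c^{(k)} \sim b$ and applying the triangle inequality, so that everything reduces to showing $\ham(S(c),S(c'))\le 3$ whenever $c \mismatches{1} c'$ --- is exactly the paper's plan. The gap is in how you bound that base case: you route the comparison through the signatures of the \emph{truncated} length-$(m-1)$ sequences obtained by deleting position $p$, and you claim that after reindexing only $\bigo(1)$ entries of $S(a_1,\ldots,a_m)$ disagree with $S(a_1,\ldots,a_{p-1},a_{p+1},\ldots,a_m)$ because ``position $p$ can be jumped over only by the at most one link that pointed at $p$ plus the at most one link that newly points past it.'' That counting claim is false. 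A link $\pred(i)\rightarrow i$ straddles $p$ whenever $\pred(i)<p<i$, which has nothing to do with the \emph{value} $a_p$; for example, in $(1,3,5,\ldots,2r-1,\,N,\,2,4,\ldots,2r)$ with $p=r+1$ and $N$ huge, all $r$ predecessor links cross position $p$, and each corresponding entry $i-\pred(i)$ drops by one when the indices to the right of $p$ shift down. So the Hamming distance between $S(a)$ and the reindexed truncated signature can be $\Theta(m)$, and your proposed triangle-inequality decomposition through the truncated signatures does not yield a constant.

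The bound you want is still true, but you must compare $S(a_1,\ldots,a_m)$ and $S(b_1,\ldots,b_m)$ \emph{directly}, position by position, so that the straddling discrepancies (which you correctly suspected might ``be charged to the same positions'') cancel: for $i\neq p$, let $q(i)$ be the common position of the predecessor of $a_i$ in $\{a_j: j\neq p\}$ and of $b_i$ in $\{b_j:j\neq p\}$ (these agree because the truncated sequences are isomorphic); then $\pred_a(i)\in\{q(i),p\}$ and $\pred_b(i)\in\{q(i),p\}$, and they can disagree only when exactly one of $a_p$, $b_p$ is sandwiched between the element at $q(i)$ and the element at $i$ --- which happens for at most one $i$ on the $a$-side (the position of the successor of $a_p$) and at most one $i$ on the $b$-side (the position of the successor of $b_p$). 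Together with position $p$ itself this gives $3$. The paper's proof sidesteps the deletion/reindexing issue even more cleanly: Proposition~\ref{prop:inductive} produces a \emph{full-length} intermediate sequence $a'$ with $a\mismatches{1}a'$ and $a'\kmismatches b$, and one compares the two length-$m$ signatures $S(a)$ and $S(a')$, observing that only the modified position, the position of its old successor, and the position of its new successor can change their predecessor pointers. You should either adopt that direct comparison or rework your base case along the cancellation argument above; as written, the key counting step fails.
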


\begin{proof}
We apply induction on the number of mismatches $k$.  

For $k=0$, $(a_{1},\ldots,a_{m}) \sim (b_{1},\ldots,b_{m})$ iff $S(a_{1},\ldots,a_{m}) = S(b_{1},\ldots,b_{m})$, so the Hamming distance is clearly zero.

Now we proceed to the inductive step. If $(a_{1},\ldots,a_{m}) \mismatches{k+1} (b_{1},\ldots,b_{m})$, then due to Proposition~\ref{prop:inductive}, there exists $(a'_{1},\ldots,a'_{m})$,
such that 
$(a'_1,\ldots,a'_m) \kmismatches (b_{1},\ldots,b_{m})$  and 
$(a_1,\ldots,a_m) \mismatches{1} (a'_{1},\ldots,a'_{m})$. 
Second constraint is equivalent (by application of Lemma~\ref{lemma:subseq}) to existence of such $i$, that $(a_1,\ldots,a_{i-1},a_{i+1},\ldots,a_m) \sim (a'_{1},\ldots,a'_{i-1},a'_{i+1},\ldots,a'_{m}).$

We want to upperbound the Hamming distance between $S(a_1,\ldots,a_m)$ and $S(a'_1,\ldots,a'_m)$. Let $j,j'$ be indices such that $a_j$ is the direct predecessor of $a_i$  and $a_{j'}$ is the direct successor of $a_i$, both taken from the set $\{a_1,\ldots,a_m\}$.
Similarly, let $\ell,\ell'$ be such indices, that $a'_\ell$ is the direct predecessor, and $a'_{\ell'}$ is the direct successor of $a'_i$, both taken from the set $\{a'_1,\ldots,a'_m\}$. That is,
$$\ldots < a_j < a_i < a_{j'} < \ldots $$
is the sorted version of $(a_1,\ldots,a_m),$ and
$$\ldots < a'_{\ell} < a'_i <  a'_{\ell'} < \ldots $$ 
is the sorted version of $(a'_1,\ldots,a'_m)$. The signatures $S(a_1,\ldots,a_m)$ and $S(a'_1,\ldots,a'_m)$ differ on at most 3 positions: $j'$, $\ell'$, and $i$.
Thus $\ham( S(a_1,\ldots,a_m), S(b_1,\ldots,b_m) )$ can be upperbounded by 
$$\ham( S(a_1,\ldots,a_m), S(a'_1,\ldots,a'_m) ) + \ham( S(a'_1,\ldots,a'_m), S(b_1,\ldots,b_m) ) \le 3k + 3,$$
which ends the inductive step.
\qed
\end{proof}
%11,4,12,1,9,3,10,7,2,5,13,0,6,8
%10,1,11,2,9,4,12,7,3,5,13,0,6,8

\begin{example}
Consider the following two sequences and their signatures:
\begin{eqnarray*}
S(11,\mathbf{4},12,1,9,3,\mathbf{10},7,2,5,13,0,6,8) &=& (\mathbf{6},\ \ \mathbf{4},-2,\ \ \mathbf{8},9,3,\mathbf{-2},5,-5,\mathbf{-8},\mathbf{-8},0,-3,-6)\\
S(10,\mathbf{1},11,2,9,4,\mathbf{12},7,3,5,13,0,6,8) &=&(\mathbf{4},\mathbf{10},-2,\mathbf{-2},9,3,\mathbf{-4},5,-5,\mathbf{-4},\mathbf{-4},0,-3,-6).
\end{eqnarray*}
One can see easily that the sequences are $2$-isomorphic and the Hamming distance between their signatures is $6$.
\end{example}

Our algorithm iterates through $i=1,2,3,\ldots$ maintaining the signature of the current $(t_{i},\ldots,t_{i+m-1})$. Hence the second step is that we develop in the next two lemmas a data structure, which allows us to store $S(t_{i},\ldots,t_{i+m-1})$, update it efficiently after increasing $i$ by one, and compute its Hamming distance to $S(p_{1},\ldots,p_{m})$.
% Implementation of the structure is explained in the following two lemmas.

\begin{lemma}
\label{lemma:prune_structure}
Given a string $S^P[1..m]$, we can maintain a string $S^T[1..2m]$ and perform the following operations:
\begin{enumerate}
\item replacing any character $S^T[x]$ in amortized time $\bigo(\log \log m)$,
\item generating the first $3k$ mismatches between $S^T[i..(i+m-1)]$ and $S^P[1..m]$ in amortized time $\bigo(k+\log \log m)$.
\end{enumerate}
The structure is initialized in time $\bigo(m \log\log m)$.
\end{lemma}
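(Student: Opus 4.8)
The plan is to represent the current string $S^T[1..2m]$ as a concatenation of maximal blocks, where each block is a contiguous substring that coincides with $S^P$ on the corresponding positions — more precisely, a block is an interval $[x_1,x_2]$ together with an offset $d$ such that $S^T[x] = S^P[x+d]$ for all $x\in[x_1,x_2]$, and neighbouring blocks have either a different offset or are separated by a position that does not match $S^P$ at all. I would store these blocks in a balanced binary search tree (or a van Emde Boas tree over $[1..2m]$ keyed by the left endpoint, since that gives the $\log\log m$ term), and separately maintain, inside a predecessor structure, the set of positions $x$ where $S^T[x]$ is a ``singleton'' mismatch against its own offset decomposition, i.e. the block boundaries that are genuine mismatches rather than mere offset changes. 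Initialization: compare $S^T$ and $S^P$ position by position (they start equal, or we just build one maximal block), taking $\bigo(m\log\log m)$ to insert everything into the predecessor structures.

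**Operation 1: replacing a character.** When $S^T[x]$ changes, at most the block containing $x$ is affected: I split that block into the part left of $x$, the single position $x$, and the part right of $x$ (a constant number of split operations on the BST, each $\bigo(\log\log m)$). Then I check whether the new value $S^T[x]$ extends the block on its left (same offset, matching $S^P$) or on its right, merging where possible; whether position $x$ now becomes a new singleton mismatch or stops being one is updated in the auxiliary predecessor structure. This touches $\bigo(1)$ blocks and $\bigo(1)$ predecessor entries, so amortized $\bigo(\log\log m)$.

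**Operation 2: generating the first $3k$ mismatches of $S^T[i..i+m-1]$ versus $S^P[1..m]$.** A position $x$ in the window is a mismatch exactly when, relative to the block containing $x$ with offset $d$, we have $x+d \ne x-(i-1)$, i.e. $d\ne -(i-1) =: d_0$, OR $x$ is one of the recorded singleton mismatches. So I walk the blocks overlapping $[i,i+m-1]$ in left-to-right order using successor queries in the BST: every block whose offset is not $d_0$ contributes its entire length in mismatches (I can report them one by one and stop as soon as I have collected $3k$), while a block with the ``good'' offset $d_0$ contributes only the recorded singleton mismatches inside it, which I enumerate via the auxiliary predecessor structure. I charge $\bigo(\log\log m)$ for the first successor query to locate the window, and thereafter each reported mismatch costs $\bigo(1)$ amortized (either it is charged to a distinct position in the output, of which there are at most $3k$, or to a block with wrong offset of which there can be at most $3k+1$ before the budget is exhausted); the recorded-singleton scan inside a good block likewise only produces reported mismatches. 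Hence the whole operation is $\bigo(k+\log\log m)$ amortized. If fewer than $3k$ mismatches exist we simply stop early.

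**The main obstacle.** The delicate point is bounding the number of blocks we must inspect to produce $3k$ mismatches: a priori a good-offset block could be followed by another good-offset block with the same $d_0$ that should have been merged, or the window could straddle many tiny blocks each contributing nothing. The first issue is ruled out by the maintenance invariant — adjacent blocks never share an offset unless separated by a recorded mismatch — so consecutive blocks with offset exactly $d_0$ are impossible; the second is handled by observing that a block contributing zero mismatches to the output must have offset $d_0$, and between any two such blocks there is a block with a different offset, which does contribute, so the number of zero-contribution blocks scanned is at most one more than the number of nonzero-contribution blocks, which is $\bigo(k)$. Making this block-counting argument airtight, together with getting the amortization right across a sequence of character replacements (potential function = number of blocks), is where the real care is needed; everything else is a routine combination of balanced BSTs and van Emde Boas trees.
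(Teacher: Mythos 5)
There is a genuine gap, and it sits at the heart of your operation 2. You classify a position $x$ in the window as a mismatch whenever the block containing $x$ has offset $d\neq d_0=-(i-1)$, i.e., you equate ``the block is aligned with a different part of $S^P$'' with ``the characters actually differ.'' These are not the same: a block with offset $d$ certifies $S^T[x]=S^P[x+d]$, but since $S^P$ may contain repeated substrings it is entirely possible that $S^P[x+d]=S^P[x+d_0]$ for all (or some) $x$ in the block, in which case the block contributes few or no genuine mismatches. Reporting its positions as mismatches breaks correctness (the lemma must return the first $3k$ \emph{actual} mismatches between $S^T[i..(i+m-1)]$ and $S^P[1..m]$; the downstream Lemmas~\ref{lemma:prune} and~\ref{lemma:reduction} rely on these being true differences of the signatures, and overcounting could falsely eliminate a valid occurrence) and it also breaks the running time: your charging argument bounds the number of inspected wrong-offset blocks by the output budget $3k$, but a wrong-offset block that happens to agree with $S^P$ at the target alignment produces no output to charge against, so you could scan $\Theta(m)$ such blocks while reporting nothing. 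The same flaw undermines your ``main obstacle'' paragraph, which asserts that a zero-contribution block must have offset $d_0$.

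The missing tool is a constant-time longest-common-prefix oracle between arbitrary substrings of $S^P$, available after $\bigo(m)$ preprocessing via a suffix array and LCA queries. This is what the paper's proof uses: each stored fragment is a subword of $S^P$ (or a sentinel character), and an LCP query between that subword and the target subword of $S^P$ either certifies in $\bigo(1)$ time that the fragment matches fully or locates the first genuine mismatch inside it. To keep the number of fragments traversed per query proportional to the number of reported mismatches, the paper additionally recompresses: whenever at least three full fragments lie between two consecutive genuine mismatches, that whole stretch matches $S^P$ at the target alignment and is replaced by a single fragment, paid for with credits deposited on tree elements at insertion time. Your maximal-block representation has no analogous mechanism (you only merge blocks sharing an offset), so even with LCP queries added, long runs of short blocks lying between consecutive mismatches would be rescanned on every query. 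You would need to import both ingredients --- the LCP oracle and the credit-based recompression --- for the stated bounds to hold.
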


\begin{proof}
We represent the current $S^T[1..2m]$ as a concatenation of a number of fragments. Each fragment is a subword of $S^P$ (possibly single letter) or a special character \$ not occurring in $S^P$. The starting positions of the fragments are kept in a van Emde Boas tree, and additionally each fragment knows its successor and predecessor. In order to bound the amortized complexity of each operation, we maintain an invariant that every element of the tree has $2$ credits available, with one credit being worth $\bigo(\log\log m)$ time. We assume that given any two substrings of $S^P$, we can compute their longest common prefix in $\bigo(1)$ time. This is possible after $\bigo(m)$ preprocessing~\cite{LCA,SuffixArray}.

We initialize the structure by partitioning $S^T$ into $2m$ single characters. The cost of initialization, including allocating the credits, is $\bigo(m \log\log m)$.
% as a result of $m$ insertions into the tree. 

Replacing $S^T[x]$ with a new character $c$ starts with locating the fragment $w$ containing the position $i$ using the tree. If $w$ is a single character, we replace it with the new one.
If $w$ is a longer subword $w[i..j]$ of $S^P$, and we need to replace its $\ell$-th character, we first split $w$ into three fragments $w[i..(i+\ell-1)]$, $w[i+\ell]$, $w[(i+\ell+1)..j]$.  In both cases we spend $\bigo(\log \log m)$ time, including the cost of inserting the new elements and allocating their credits.

\begin{figure}[t]
\begin{center}
\includegraphics[width=0.8\linewidth]{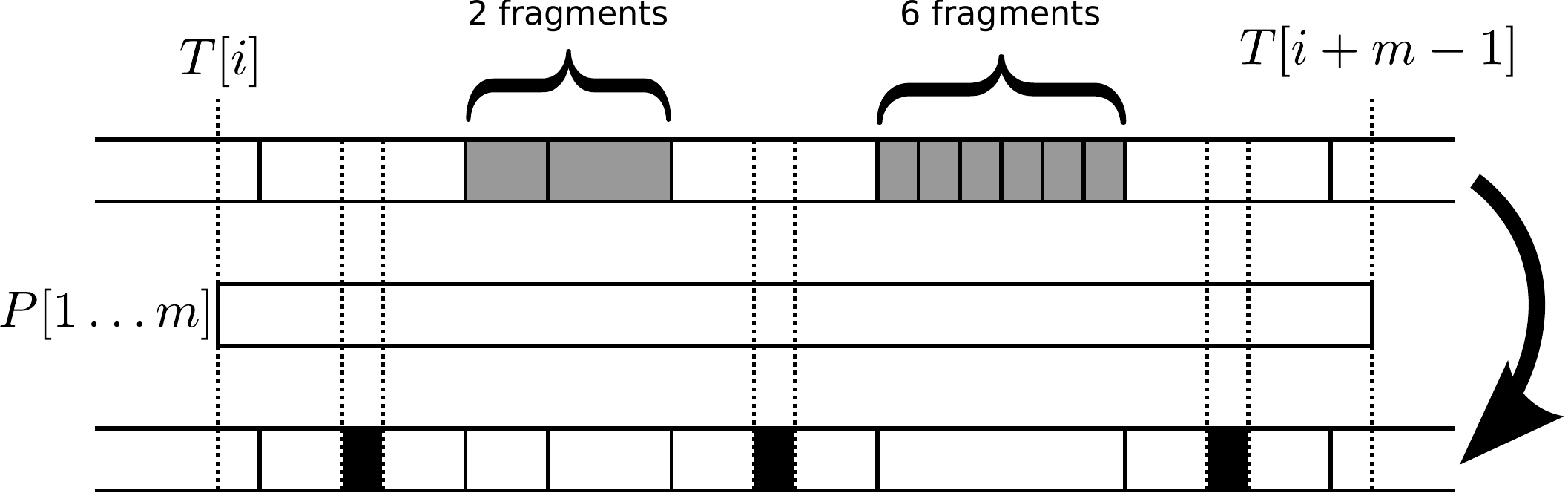}
\end{center}
\caption{Updating the representation. Black boxes represent mismatches, gray areas are full fragments between mismatches. Fragments are either left untouched (on the left), or compressed into a single new one (on the right).}
\label{fig:decomposition}
\end{figure}

Generating the mismatches begins with locating the fragment corresponding to the position $i$. Then we scan the representation from left to right starting from there.
Locating the fragment takes $\bigo(\log \log m)$ time, but traversing can be done in $\bigo(1)$ time per each step, as we can use the information about the successor of each fragment. We will match $S^P$ with the representation of $S^T$ while scanning. This is done using constant time longest common prefix queries. Each such query allows us to either detect a mismatch, or move to the next fragment. Whenever we find a mismatch, if the part of the text between the previous mismatch (or the beginning of the window) and the current mismatch contains at least 3 full fragments, we replace them with a single fragment, which is the corresponding subword of $S^P$. If there are less than $3$ full fragments, we keep the current representation intact, see Fig.~\ref{fig:decomposition}. We stop the scanning after reaching $(3k+1)$-th mismatch, or after the whole window was processed, whichever comes first. 

To bound the amortized cost of processing a single mismatch, let $p$ be the number of full fragments between the current mismatch and the previous one (or the beginning of the window). If $p\ge 3$, we concatenate all $p$ fragments by simply erasing every full fragment except the first one. We also need to traverse (and perform longest common prefix queries on) $p+2$ fragments. However, we remove $p-1$ elements from the tree, and hence can use all their $2(p-1)$ credits to pay for the processing. Thus, the amortized cost is $\bigo((p-1)\log \log m + (p+2) - 2(p-1)\log \log m) = \bigo(1)$. Therefore we need $\bigo(k+\log \log m)$ time in total to generate all the mismatches.
\qed
\end{proof}

\begin{lemma}
\label{lemma:prune}
Given a pattern $(p_{1},\ldots,p_{m})$ and a text $(t_{1},\ldots,t_{2m})$, we can maintain an implicit representation of the current signature $S(t_{i},\ldots,t_{i+m-1})$ and perform the following operations:
\begin{enumerate}
\item increasing $i$ by one in amortized time $\bigo(\log\log m)$,
\item generating the first $3k$ mismatches between $S(p_{1},\ldots,p_{m})$ and $S(t_{i},\ldots,t_{i+m-1})$ in time $\bigo(k+\log\log m)$.
\end{enumerate}
The structure is initialized in time $\bigo(m \log\log m)$.
\end{lemma}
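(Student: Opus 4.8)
The plan is to instantiate the data structure of Lemma~\ref{lemma:prune_structure} with the fixed string $S^P:=S(p_{1},\ldots,p_{m})$, and to maintain its second string $S^T[1..2m]$ under the invariant that, for the current value of $i$, the substring $S^T[i..i+m-1]$ equals $S(t_{i},\ldots,t_{i+m-1})$. With this invariant, operation~(2) is literally operation~(2) of Lemma~\ref{lemma:prune_structure}, giving the first $3k$ mismatches between $S^P[1..m]$ and $S^T[i..i+m-1]$ in the required time. For the initialization we sort $(t_{1},\ldots,t_{2m})$ (replacing each $t_{j}$ by its rank, so that later van~Emde~Boas operations run over a universe of size $2m$) and sort the first window and the pattern, which in $\bigo(m\log\log m)$ yields $S^P$ and the initial $S^T$ (the signature of $(t_{1},\ldots,t_{m})$ in positions $1..m$, and the sentinel \$ elsewhere); feeding these into Lemma~\ref{lemma:prune_structure} costs another $\bigo(m\log\log m)$. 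Hence everything reduces to realizing operation~(1) by $\bigo(1)$ character replacements on $S^T$, each of which Lemma~\ref{lemma:prune_structure} performs in amortized $\bigo(\log\log m)$.

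The key observation is that a signature entry is the difference $j-\pred(j)$ between a position and the position of its predecessor inside the window, and such a difference is unchanged when the whole window is shifted, provided neither endpoint of the difference leaves the window. So I would keep the current window in a van~Emde~Boas tree supporting predecessor/successor queries (together with an array giving the text-position of each stored element), and argue that when the window slides from $[i,i+m-1]$ to $[i+1,i+m]$ only $\bigo(1)$ entries of $S^T$ change: position $i$, which leaves the window (we overwrite it with \$); position $i+m$, which enters; the position of the successor of $t_{i}$ in the old window, whose predecessor is modified by the deletion of $t_{i}$; the position of the successor of $t_{i+m}$ in the new window, whose predecessor becomes $t_{i+m}$; and the position of the minimum of the new window, whose signature entry is exactly its position relative to the window start, hence drops by one when the minimum element is the same as in the previous window. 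Every other difference $j-\pred(j)$ is unaffected, so locating these $\bigo(1)$ positions and recomputing their entries takes $\bigo(1)$ van~Emde~Boas queries, i.e. $\bigo(\log\log m)$ time, after which we issue the corresponding $\bigo(1)$ replace operations of Lemma~\ref{lemma:prune_structure}.

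I expect the main obstacle to be the routine but somewhat delicate case analysis needed to justify that precisely these $\bigo(1)$ positions can change — in particular the element that is the minimum of the window, whose entry is genuinely window-position-dependent (unlike every other entry), and verifying that when $t_{i}$ or $t_{i+m}$ is, or becomes, the minimum the bookkeeping stays consistent (for instance, deleting a minimum $t_{i}$ turns out to leave the entry of its successor unchanged), so that no update is ever missed. Once the invariant $S^T[i..i+m-1]=S(t_{i},\ldots,t_{i+m-1})$ is established, correctness of reading off the first $3k$ mismatches and all the stated time bounds follow directly from Lemma~\ref{lemma:prune_structure}, with the $\bigo(m\log\log m)$ initialization absorbing the sorting, the construction of the van~Emde~Boas tree for $(t_{1},\ldots,t_{m})$, and the set-up of the underlying structure.
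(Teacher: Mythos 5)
Your proposal is correct and follows essentially the same route as the paper: instantiate Lemma~\ref{lemma:prune_structure} with $S^P=S(p_1,\ldots,p_m)$, keep the window elements (renamed to $\{1,\ldots,2m\}$) in a van~Emde~Boas tree, and observe that a shift of the window changes only $\bigo(1)$ signature entries (the leaving/entering positions and the successors of $t_i$ and $t_{i+m}$), each fixed by a replace operation. You are in fact slightly more careful than the paper in explicitly accounting for the minimum element, whose entry $j-\pred(j)=j-0$ is the only one that is not shift-invariant.
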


\begin{proof}
First we construct $S(p_{1},\ldots,p_{m})$ in time $\bigo(m\log\log m)$ by sorting. Whenever we increase $i$ by one, just a few characters of $S(t_{i},\ldots,t_{i+m-1})=(s_{1},\ldots,s_{m})$ need to be modified. The new signature can be created by first removing the first character $s_{1}$, appending a new character $s_{m+1}$, and then modifying the characters corresponding to the successors of $t_{i}$ and $t_{i+m}$. By maintaining all $t_{i},\ldots,t_{i+m-1}$ in a van Emde Boas tree (we can rename the elements so that $t_{i} \in \{1,\ldots,2m\}$ by sorting) we can calculate both $s_{m+1}$ and the characters which needs to be modified in $\bigo(\log\log m)$ time. Current $S(t_{i},\ldots,t_{i+m-1})$ is stored using Lemma~\ref{lemma:prune_structure}. We initialize $S^T[1..2m]$ to be $S(t_{1},\ldots,t_{m})$ concatenated with $m$ copies of, say, $0$.  After increasing $i$ by one, we replace $S^T[i]$, $S^T[i+m]$ and possibly two more characters in $\bigo(\log\log m)$ time. Generating the mismatches is straightforward using Lemma~\ref{lemma:prune_structure}.
\qed
\end{proof}

Now our algorithm first uses Lemma~\ref{lemma:prune} to quickly eliminate the starting positions $i$ such that  the Hamming distance between the corresponding signatures is large. For the remaining starting positions, we reduce checking if $(t_{i},\ldots,t_{i+m-1}) \kmismatches (p_{1},\ldots,p_{m})$ to a weighted version of the well-known longest increasing subsequence problem on at most $3(k+1)$ elements. In the weighted variant, which we call \emph{heaviest increasing subsequence}, the input is a sequence $(a_{1},\ldots,a_{\ell})$ and weight $w_{i}$ of each element $a_{i}$, and we look for an increasing subsequence with the largest total weight, i.e., for $1\leq i_{1} < \ldots < i_{s} \leq \ell$ such that $a_{i_{1}} < \ldots < a_{i_{s}}$ and $\sum_{j}w_{i_{j}}$ is maximized.

\begin{lemma}
\label{lemma:reduction}
Assuming random access to $(a_{1},\ldots,a_{m})$, the sorting permutation $\pi_{b}$ of $(b_{1},\ldots,b_{m})$, and the rank of every $b_{i}$ in $\{b_{1},\ldots,b_{m}\}$, and given $\ell$ positions where $S(a_{1},\ldots,a_{m})$ and $S(b_{1},\ldots,b_{m})$ differ, we can reduce in $\bigo(\ell\log\log\ell)$ time checking if $(a_{1},\ldots,a_{m}) \kmismatches (b_{1},\ldots,b_{m})$ to computing the heaviest increasing subsequence on at most $\ell+1$ elements.
\end{lemma}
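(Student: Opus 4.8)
The plan is to exploit that $k$-isomorphism is symmetric so that I can work with the sorting data I actually have, namely that of $b$. By the argument of Lemma~\ref{lemma:checking} with the two sequences exchanged, $(a_{1},\ldots,a_{m}) \kmismatches (b_{1},\ldots,b_{m})$ holds iff the sequence $a'$ defined by $a'_{s}:=a_{\pi_{b}(s)}$ has an increasing subsequence of length $m-k$; moreover any entry of $a'$ is computable in $\bigo(1)$ time from random access to $a$ and from $\pi_{b}$, so I never need the sorting permutation of $a$. Thus it suffices to decide whether $\mathrm{LIS}(a')\ge m-k$, knowing only $\bigo(1)$-access to $a'$, the rank function $\rank_{b}$, and the set $D$ of the $\ell$ positions on which $S(a_{1},\ldots,a_{m})$ and $S(b_{1},\ldots,b_{m})$ differ.

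Next I would pin down the structure of $a'$. Fix $s\ge2$ with $\pi_{b}(s)\notin D$, equivalently $s\notin\{\rank_{b}(i):i\in D\}$ since $\pi_{b}^{-1}=\rank_{b}$. Then $a_{\pi_{b}(s-1)}$ is the immediate predecessor of $a_{\pi_{b}(s)}$ among $\{a_{1},\ldots,a_{m}\}$: indeed $\pred_{b}(\pi_{b}(s))=\pi_{b}(s-1)$ because $b_{\pi_{b}(s-1)}$ immediately precedes $b_{\pi_{b}(s)}$ in sorted order, and as the $\pi_{b}(s)$-th coordinates of the two signatures agree, $\pred_{a}(\pi_{b}(s))=\pred_{b}(\pi_{b}(s))=\pi_{b}(s-1)$. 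Hence $a'_{s-1}$ and $a'_{s}$ are consecutive values of $a$ in sorted order. Consequently, cutting $a'$ at the positions of $C:=\{1\}\cup\{\rank_{b}(i):i\in D\}$ into consecutive blocks, each block is an increasing run of consecutive sorted-$a$ values; there are $q\le\ell+1$ blocks, and since the $a_{i}$ are distinct their value sets are pairwise disjoint intervals. Replacing each $a_{i}$ by its rank, $a'$ becomes a permutation of $\{1,\ldots,m\}$ cut into $q\le\ell+1$ position-contiguous blocks of consecutive integers with pairwise disjoint value-intervals $I_{1},\ldots,I_{q}$, in block order.

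The heart of the proof is the claim that $\mathrm{LIS}(a')$ equals the optimum of the heaviest increasing subsequence instance with one element per block, where the $j$-th element (numbered by block order, that is, by starting position) has value $\min I_{j}$ and weight $|I_{j}|$. The inequality ``$\ge$'' is immediate: an increasing subsequence of the instance is a set of blocks in block order whose intervals increase (and, being disjoint, are then entirely ordered), and concatenating these whole blocks is an increasing subsequence of $a'$ of the matching total length. For ``$\le$'', given an increasing subsequence $\sigma$ of $a'$, look at the blocks $R_{j_{1}},\ldots,R_{j_{t}}$ it touches, in block order; for $u<v$ every position of $R_{j_{u}}$ precedes every position of $R_{j_{v}}$, so in $\sigma$ the largest value taken from $R_{j_{u}}$ is below the smallest value taken from $R_{j_{v}}$, and since these lie in the disjoint integer intervals $I_{j_{u}},I_{j_{v}}$ this forces $\max I_{j_{u}}<\min I_{j_{v}}$. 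So the touched blocks form an increasing subsequence of the instance of weight $\sum_{u}|I_{j_{u}}|\ge|\sigma|$. Since the total weight of the instance is $m$, we conclude $(a_{1},\ldots,a_{m}) \kmismatches (b_{1},\ldots,b_{m})$ iff its optimum is at least $m-k$.

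Finally, building the instance is cheap: compute $\{\rank_{b}(i):i\in D\}$ in $\bigo(\ell)$ time, sort it together with $1$ into $\mathrm{lo}_{1}<\cdots<\mathrm{lo}_{q}$ (set $\mathrm{lo}_{q+1}:=m+1$), so the $j$-th weight is $\mathrm{lo}_{j+1}-\mathrm{lo}_{j}$ (using all cut points of $C$ only over-splits blocks, which does not change the optimum); fetch the $q$ values $a_{\pi_{b}(\mathrm{lo}_{j})}=\min I_{j}$ by random access and sort them, replacing each by its rank among these $q$ values, which is all the heaviest increasing subsequence needs about the $\min I_{j}$. Both sorts are on at most $\ell+1$ integers and cost $\bigo(\ell\log\log\ell)$, everything else is $\bigo(\ell)$, and the instance has $q\le\ell+1$ elements. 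The step I expect to be the main obstacle is the structural one: recognising that $a$ should be permuted by the sorting permutation of $b$ rather than vice versa (the former being the information at hand), and that precisely because the resulting $a'$ is a permutation the blocks have pairwise disjoint value-intervals, which is exactly what eliminates any ``overlap'' correction term and makes the heaviest-increasing-subsequence formulation exact with only $\ell+1$ elements.
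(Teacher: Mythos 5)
Your proof is correct and follows essentially the same route as the paper: your position-contiguous blocks of $a'=a\circ\pi_{b}$ are exactly the paper's maximal predecessor-pointer paths (each starting at a signature mismatch or at the cut point $1$), each weighted by its length and collapsed to a single element, yielding the same heaviest-increasing-subsequence instance on at most $\ell+1$ elements with the same $\bigo(\ell\log\log\ell)$-time construction via ranks. The only presentational difference is that you establish the all-or-nothing property per block by proving $\mathrm{LIS}(a')$ equals the block-level optimum directly from the disjoint consecutive-value intervals, whereas the paper argues it by an exchange argument on an optimal solution.
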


\begin{figure}[t]
\begin{center}
\includegraphics[width=0.6\linewidth]{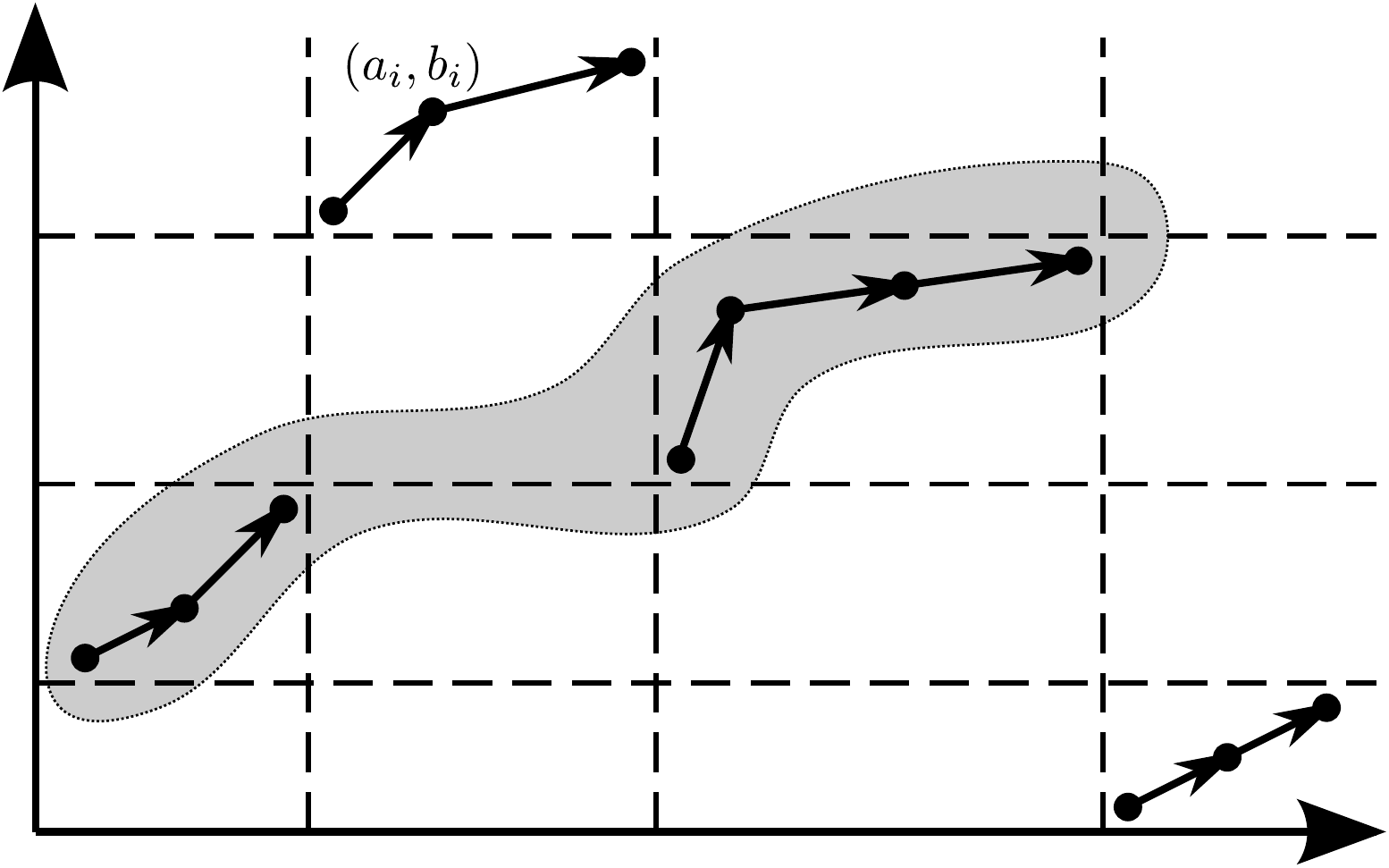}
\end{center}
\caption{Partition into maximal paths. The heaviest increasing subsequence is marked.}
\label{fig:chains}
\end{figure}

\begin{proof}
Let $d_{1},\ldots,d_{\ell}$ be the positions where $S(a_{1},\ldots,a_{m})$ and $S(b_{1},\ldots,b_{m})$ differ. From the definition of a signature, for any other position $i$ the predecessors of $a_{i}$ and $b_{i}$ in their respective sequences are at the same position $j$, which we denote by $j\rightarrow i$.
For any given $i$, $j\rightarrow i$ for at most one $j$. Similarly, for any given $j$, $j\rightarrow i$ for at most one $i$, because the only such $i$ corresponds to the successor of, say, $a_{j}$ in its sequence. Consider a partition of the set of all positions into maximal \emph{paths} of the form $j_{1} \rightarrow \ldots \rightarrow j_{k}$ (see Fig.~\ref{fig:chains}). Such partition is clearly unique, and furthermore the first element of every path is one of the positions where the signatures differ (except one possible path starting with the position corresponding to the smallest element). Hence there are at most $\ell+1$ paths, and we denote by $I_{j}$ the path starting with $d_{j}$. If the smallest element occurs at the same position in both sequences, we additionally denote this position by $d_{0}$, and call the path starting there $I_{0}$ (we will assume that this is always the case, which can be ensured by appending $-\infty$ to both sequences).

Recall that our goal is to check if $(a_{1},\ldots,a_{m}) \kmismatches (b_{1},\ldots,b_{m})$. For this we need to check if  there exist $i_1,\ldots,i_{m-k}$ such that $a_{i_1} < \ldots < a_{i_{m-k}}$ and $b_{i_1} < \ldots < b_{i_{m-k}}$. Alternatively, we could compute the largest $s$ for which there exist a solution $i_1,\ldots,i_{s}$ such that $a_{i_1} < \ldots < a_{i_{s}}$ and $b_{i_1} < \ldots < b_{i_{s}}$. We claim that one can assume that for each path $I$ either none of its elements are among $i_1,\ldots,i_{s}$, or all of its elements are there.  We prove this in two steps.

\begin{enumerate}
\item If $i_{k}\in I$ and $i_{k}\rightarrow j$, then without losing the generality $i_{k+1}=j$. Assume otherwise, so $i_{k+1}\neq j$ or $k=s$. Recall that it means that $a_{j}$ is the successor of $a_{i_{k}}$ and $b_{j}$ is the successor of $b_{i_{k}}$. Hence $a_{i_{k}} < a_{j}$ and $b_{i_{k}} < b_{j}$. If $k=s$ we can extend the current solution by appending $j$. Otherwise $a_{j} < a_{i_{k+1}}$ and $b_{j} < b_{i_{k+1}}$, so we can extend the solution by inserting $j$ between $i_{k}$ and $i_{k+1}$.

\item If $i_{k}\in I$ and $j\rightarrow i_{k}$, then without losing the generality $i_{k-1}=j$. Assume otherwise, so $i_{k-1}\neq j$ or $k=1$. Similarly as in the previous case, $a_{j}$ is the predecessor of $a_{i_{k}}$ and $b_{j}$ is the predecessor of $b_{i_{k}}$. Hence $a_{j} < a_{i_{k}}$ and $b_{j} < b_{i_{k}}$. If $k=1$ we can extend the current solution by prepending $j$. Otherwise $a_{i_{k+1}} < a_{j}$ and $b_{i_{k+1}} < b_{j}$, so we can insert $j$ between $i_{k-1}$ and $i_{k}$.
\end{enumerate}

Now let the weight of a path $I$ be its length $|I|$. From the above reasoning we know that the optimal solution contains either no elements from a path, or all of its elements. Hence if we know which paths contain the elements used in the optimal solution, we can compute $s$ as the sum of the weights of these paths. Additionally, if we take such optimal solution, and remove all but the first element from every path, we get a valid solution. Hence $s$ can be computed by choosing some solution restricted only to $d_{0},\ldots,d_{\ell}$, and then summing up weights of the corresponding paths. It follows that computing the optimal solution can be done, similarly as in the proof of Lemma~\ref{lemma:checking}, by finding an increasing subsequence. We define a new weighted sequence $(a'_{0},\ldots,a'_{\ell})$ by setting $a'_{j}=b_{\pi_{b}(d_{j})}$ and choosing the weight of $a'_{j}$ to be $|I_{j}|$. Then an increasing subsequence of $(a'_{0},\ldots,a'_{\ell})$ corresponds to a valid solution restricted to $d_{0},\ldots,d_{\ell}$, and moreover the weight of the heaviest such subsequence is exactly $s$. In other words, we can reduce our question to computing the heaviest increasing subsequence.

Finally, we need to analyze the complexity of our reduction. Assuming random access to both $(a_{1},\ldots,a_{m})$ and $\pi_{b}$, we can construct $(a'_{0},\ldots,a'_{\ell})$ in time $\bigo(\ell)$. Computing the weight of every $a'_{j}$ is more complicated. We need to find every $|I_{j}|$ without explicitly constructing the paths.  For every $d_{j}$ we can retrieve the rank $r_{j}$ of its corresponding element in $\{b_{1},\ldots,b_{m}\}$. Then $I_{j}$ contains $d_{j}$ and all $i$ such that the predecessor of $b_{i}$ among $\{b_{d_{0}},\ldots,b_{d_{\ell}}\}$ is $b_{d_{j}}$. Hence $|I_{j}|$ can be computed by counting such $i$. This can be done by locating the successor $b_{d_{j'}}$ of $b_{d_{j}}$ in $\{b_{d_{0}},\ldots,b_{d_{\ell}}\}$ and returning $r_{d_{j'}} - r_{d_{j}}-1$ (if the successor does not exist, $m-r_{d_{j}}$). To find all these successors, we only need to sort $\{b_{d_{0}},\ldots,b_{d_{\ell}}\}$, which can, again, be done in time $\bigo(\ell\log\log\ell)$.
\qed
\end{proof}

\begin{lemma}
\label{lemma:heaviest}
Given a sequence of $\ell$ weighted elements, we can compute its heaviest increasing subsequence in time $\bigo(\ell\log\log\ell)$.
\end{lemma}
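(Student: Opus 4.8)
The plan is to cast the heaviest increasing subsequence as a textbook dynamic program and to implement its one nontrivial step with van Emde Boas trees. First I would shrink the values: sort $(a_{1},\ldots,a_{\ell})$ together with their weights so that after relabelling $a_{i}\in\{1,\ldots,\ell\}$; by~\cite{Han} this costs $\bigo(\ell\log\log\ell)$ and does not change any increasing subsequence. Writing $w_{i}$ for the weight of the $i$-th element and $f(i)=w_{i}+\max\{f(j): j<i,\ a_{j}<a_{i}\}$ (with the inner maximum equal to $0$ over the empty set), the weight of the heaviest increasing subsequence is $\max_{i}f(i)$. I would process $i=1,\ldots,\ell$ in this order: computing $f(i)$ is then a prefix-maximum query over the already processed elements of value $<a_{i}$, and recording it is a point update taking a maximum at position $a_{i}$. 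An ordinary balanced-tree (or Fenwick-tree) solution already gives $\bigo(\ell\log\ell)$, so the real task is to support both operations in $\bigo(\log\log\ell)$ amortized time.

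For that I would store only the \emph{staircase}. After the first $i-1$ elements have been read, let $g(v)$ be the largest weight of an increasing subsequence among them that ends at value exactly $v$, and call $v$ \emph{dominant} if $g(v)>g(v')$ for every smaller value $v'$ already seen. Along the dominant values the $g$-values are strictly increasing, and $\max_{v'\le v}g(v')$ equals the $g$-value of the largest dominant value that is $\le v$. I would keep precisely the dominant values, each tagged with its $g$-value, in a van Emde Boas tree over $\{1,\ldots,\ell\}$. To handle the $i$-th element $(a_{i},w_{i})$: a predecessor query returns the largest dominant value $v<a_{i}$, so $f(i)=w_{i}+g(v)$ (or $f(i)=w_{i}$ if there is none); then I locate the largest dominant value $v^{\ast}\le a_{i}$, and if $g(v^{\ast})\ge f(i)$ nothing changes; otherwise $a_{i}$ becomes dominant with tag $f(i)$ --- inserted as a new key, or the tag of $v^{\ast}$ overwritten when $v^{\ast}=a_{i}$ --- after which I repeatedly take the successor of $a_{i}$ and delete it while its tag is $\le f(i)$. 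Since the tags increase along the dominant values, the deleted successors form exactly an initial run, so the scan stops at the first surviving one, and the invariant ``the stored keys are exactly the current dominant values, with strictly increasing tags'' is restored.

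The time bound then follows immediately: the compression costs $\bigo(\ell\log\log\ell)$; each of the $\ell$ steps performs $\bigo(1)$ predecessor, successor, lookup, insertion, or overwrite operations plus a number of deletions, and since there are at most $\ell$ insertions over the whole run there are at most $\ell$ deletions as well; each van Emde Boas operation costs $\bigo(\log\log\ell)$, so the total is $\bigo(\ell\log\log\ell)$, and because the amortization is spread over the entire computation this is a genuine worst-case bound for it. Equal values --- which do not occur in our application but could in a general instance --- are handled by processing all elements sharing a value consecutively in order of original position, each of them reading the prefix maximum strictly below that value before any of them modifies the tree.

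The step I expect to be the crux is the correctness of the staircase, that is, the two facts that (i) discarding the non-dominant entries $g(v)$ loses nothing, because $\max_{v'\le v}g(v')$ is always attained at a dominant value, so the query in step $i$ is still answered exactly; and (ii) when $a_{i}$ is promoted, the values that thereby stop being dominant are precisely a contiguous block of successors of $a_{i}$. Both reduce to the monotonicity of the staircase, but verifying that the update rule above really maintains the dominant set with no stale or missing key --- across the cases $v^{\ast}=a_{i}$ versus $v^{\ast}<a_{i}$ and $f(i)\le g(v^{\ast})$ versus $f(i)>g(v^{\ast})$ --- is where the care is needed; the rest is bookkeeping.
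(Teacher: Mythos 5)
Your proposal is correct and follows essentially the same route as the paper: the same dynamic program $f(i)=w_i+\max\{f(j):j<i,\ a_j<a_i\}$, the same $\bigo(\ell\log\log\ell)$ renaming of values into $\{1,\ldots,\ell\}$ by sorting, and the same ``staircase'' of dominant values kept in a van Emde Boas tree, with a predecessor query answering the prefix maximum and an amortized charge of deletions against insertions. The only differences are cosmetic (your tie-breaking when $g(v^{\ast})=f(i)$ and your explicit treatment of repeated values), neither of which changes the argument.
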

\begin{proof}
Let the sequence be $(a_{1},\ldots,a_{\ell})$, and denote the weight of $a_{i}$ by $w_{i}$. We will describe how to compute the weight of the heaviest increasing subsequence, reconstructing the subsequence itself will be straightforward. At a high level, for each $i$ we want to compute the weight $r_{i}$ of the heaviest increasing subsequence ending at $a_{i}$. Observe that $r_{i}=w_{i} + \max\{ r_{j} : j<i \text{ and } a_{j} < a_{i}\}$, where we  assume that $a_{0}=-\infty$ and $r_{0}=0$.  We process $i=1,\ldots,\ell$, so we need a dynamic structure where we could store all already computed results $r_{j}$ so that we can select the appropriate one efficiently. To simplify the implementation of this structure, we rename the elements in the sequence so that $a_{i} \in \{1,\ldots,\ell\}$. This can be done in $\bigo(\ell\log\log\ell)$ time by sorting. Then the dynamic structure needs to store $n$ values $v_{1},\ldots,v_{n}$, all initialized to $-\infty$ in the beginning, and implement two operations:

\begin{enumerate}
\item increase any $v_{k}$,
\item given $k$, return the maximum among $v_{1},\ldots,v_{k}$.
\end{enumerate}

Then to compute $r_{i}$ we first find the maximum among $v_{1},\ldots,v_{a_{i}-1}$, and afterwards update $v_{a_{i}}$ to be $r_{i}$.

Now we describe how the structure is implemented. First observe that if $v_{i} > v_{j}$ and $i<j$, we will never return the current $v_{j}$ as the maximum, hence we don't need to store it. In other words, we only need to store $v_{i_{1}}, \ldots, v_{i_{t}}$ such that $i_{1}=1$ and each $i_{j+1}$ is the smallest position on the right of $i_{j}$ such that $v_{i_{j+1}} > v_{i_{j}}$. We store all $i_{j}$ in a van Emde Boas tree, where each element knows its successor and predecessor. Then to find the maximum among $v_{1},\ldots,v_{k}$ we perform a predecessor query in the tree to locate the largest $i_{j}\leq k$, and return the corresponding $v_{i_{j}}$. To increase $v_{k}$, we must consider two cases. First, it might happen that $k=i_{j}$. In such case we update $v_{i_{j}}$ and then look at the successor $i_{j+1}$ of $i_{j}$ in the tree. If $v_{i_{j+1}} \leq v_{i_{j}}$, we remove $i_{j+1}$ and repeat, and otherwise stop. The other case is that $k$ is not in the tree, then we update $v_{k}$ and locate the largest $i_{j}\leq k$. Then if $v_{i_{j}}>v_{k}$, we don't have to modify the tree. Otherwise we need to insert $k$ into the tree, and then repeatedly remove its successors as long as they correspond to elements which are smaller or equal to $v_{k}$, as in the previous case. Finding the maximum clearly requires $\bigo(\log\log\ell)$ time, as it requires just one predecessor query. Increasing any value requires inserting at most one new element, and removing zero or more already existing elements, hence its amortized complexity is $\bigo(\log\log\ell)$. We execute $\ell$ operations in total, so the running time is as claimed.
\qed
\end{proof}

%Now we can combine the above ingredients to obtain an efficient algorithm.

\begin{theorem}
\label{theorem:algo}
Order-preserving pattern matching with $k$ mismatches can be solved in time $\bigo(n(\log\log m+k\log\log k))$, where $n$ is the length of the text and $m$ is the length of the pattern.
\end{theorem}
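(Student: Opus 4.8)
The plan is to assemble Theorem~\ref{theorem:algo} directly from the building blocks developed above, combining the signature-based filtering with the heaviest increasing subsequence reduction. First I would dispose of the two standing assumptions: by cutting the text into $\bigo(n/m)$ overlapping blocks of length $2m$ and running the algorithm on each block separately, we may assume $n\le 2m$, and the total running time is simply the sum over blocks; and by replacing each element with the pair consisting of the element and its position (and sorting lexicographically) we may assume all entries are distinct. Neither reduction costs more than the $\bigo(n\log\log m)$ we are already paying for sorting, and since $\log\log m = \bigo(\log\log(2m))$ the complexity is unaffected.

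Next I would describe the main loop. We initialize the structure of Lemma~\ref{lemma:prune} in $\bigo(m\log\log m)$ time, together with the sorting permutation $\pi_p$ of $(p_1,\ldots,p_m)$ and the ranks of the $p_j$, which are needed for the reduction of Lemma~\ref{lemma:reduction} (here the roles of $a$ and $b$ in that lemma are played by the text window and the pattern respectively, or vice versa; either works since $k$-isomorphism is symmetric). Then for $i=1,2,\ldots,n-m+1$ we use Lemma~\ref{lemma:prune} to generate the first $3k$ mismatches between $S(p_1,\ldots,p_m)$ and $S(t_i,\ldots,t_{i+m-1})$ in $\bigo(k+\log\log m)$ time. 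If the number of mismatches reported is more than $3k$, then by Lemma~\ref{lemma:implication} (in contrapositive form) the window cannot be $k$-isomorphic to the pattern, and we discard position $i$ immediately. Otherwise we have $\ell\le 3k$ mismatch positions in hand, and we feed them into Lemma~\ref{lemma:reduction} to obtain in $\bigo(\ell\log\log\ell)=\bigo(k\log\log k)$ time an instance of heaviest increasing subsequence on at most $\ell+1\le 3k+1$ elements; solving it by Lemma~\ref{lemma:heaviest} costs another $\bigo(k\log\log k)$, and by the correctness statement of Lemma~\ref{lemma:reduction} the optimal weight $s$ satisfies $s\ge m-k$ exactly when $(t_i,\ldots,t_{i+m-1})\kmismatches(p_1,\ldots,p_m)$.

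Finally I would total the costs. Advancing $i$ by one through the structure of Lemma~\ref{lemma:prune} is amortized $\bigo(\log\log m)$; generating the bounded list of mismatches is $\bigo(k+\log\log m)$; and the reduction plus heaviest increasing subsequence computation, which is only performed when at most $3k$ mismatches were found, is $\bigo(k\log\log k)$. Summing over the $\bigo(n)$ values of $i$ (using $n\le 2m$ so the one-time $\bigo(m\log\log m)$ initialization is absorbed) gives $\bigo(n(\log\log m + k + k\log\log k)) = \bigo(n(\log\log m + k\log\log k))$, and then summing over the $\bigo(n/m)$ blocks of the original text leaves the bound unchanged. I do not anticipate a genuine obstacle here: every lemma has already been proved, so the theorem is essentially a bookkeeping argument. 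The only point requiring a little care is checking that the inputs demanded by Lemma~\ref{lemma:reduction} — random access to one sequence, the sorting permutation and ranks of the other — are maintained cheaply as $i$ advances; this follows from the van Emde Boas tree over $\{t_i,\ldots,t_{i+m-1}\}$ already used in Lemma~\ref{lemma:prune}, since it supplies both the ranks within the current window and, via renaming, the data needed to reconstruct the sorting permutation on demand.

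\begin{proof}
We first reduce to the case $n \le 2m$ by splitting the text into overlapping windows of length $2m$, as described in the preliminaries; this multiplies the running time by the number of windows, which is $\bigo(n/m)$, so it suffices to prove an $\bigo(m(\log\log m + k\log\log k))$ bound per window. We also assume all elements are distinct, replacing each by the pair (element, position) if necessary; sorting these pairs costs $\bigo(m\log\log m)$.

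We build the structure of Lemma~\ref{lemma:prune} in $\bigo(m\log\log m)$ time, and we also compute the sorting permutation $\pi_{p}$ of $(p_{1},\ldots,p_{m})$ and the rank of every $p_{j}$ in $\bigo(m\log\log m)$ time by sorting; these are the auxiliary inputs required by Lemma~\ref{lemma:reduction} when we take $(b_{1},\ldots,b_{m})$ to be the pattern. Random access to $(t_{i},\ldots,t_{i+m-1})$ is trivially available.

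For each $i=1,\ldots,n-m+1$ we proceed as follows. Using Lemma~\ref{lemma:prune} we generate the first $3k$ mismatches between $S(p_{1},\ldots,p_{m})$ and $S(t_{i},\ldots,t_{i+m-1})$ in $\bigo(k+\log\log m)$ time. If more than $3k$ mismatches are found, Lemma~\ref{lemma:implication} guarantees that $(t_{i},\ldots,t_{i+m-1}) \kmismatches (p_{1},\ldots,p_{m})$ is impossible, so we reject this $i$. Otherwise we have a set of $\ell \le 3k$ positions where the two signatures differ, and Lemma~\ref{lemma:reduction} reduces, in $\bigo(\ell\log\log\ell) = \bigo(k\log\log k)$ time, the question of whether $(t_{i},\ldots,t_{i+m-1}) \kmismatches (p_{1},\ldots,p_{m})$ to computing the heaviest increasing subsequence on at most $\ell+1 \le 3k+1$ elements. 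By Lemma~\ref{lemma:heaviest} this takes a further $\bigo(k\log\log k)$ time, and the resulting optimal weight $s$ equals the largest length of a common increasing subsequence, so $s \ge m-k$ iff the window is order-isomorphic to the pattern with at most $k$ mismatches, by Lemma~\ref{lemma:subseq}. Advancing from $i$ to $i+1$ costs amortized $\bigo(\log\log m)$ by Lemma~\ref{lemma:prune}.

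Summing over all $i$ within the window, and noting $n \le 2m$, the total is
\[
\bigo(m\log\log m) + \bigo\!\left(m(\log\log m + k + k\log\log k)\right) = \bigo\!\left(m(\log\log m + k\log\log k)\right).
\]
Summing over the $\bigo(n/m)$ windows yields $\bigo(n(\log\log m + k\log\log k))$.
\qed
\end{proof}
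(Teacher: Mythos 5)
Your proof follows the paper's own argument for Theorem~\ref{theorem:algo} essentially step for step: filter each window position via the signature Hamming distance using Lemmas~\ref{lemma:prune} and~\ref{lemma:implication}, and for surviving positions apply Lemmas~\ref{lemma:reduction} and~\ref{lemma:heaviest}, with the same cost accounting and the same overlapping-fragments reduction to $n\le 2m$. One correction, though: your claim that distinctness of elements can be assumed without loss of generality by replacing each element with the pair (element, position) is exactly the shortcut the paper explicitly rejects --- as noted in its preliminaries, this substitution can turn two non-isomorphic sequences into isomorphic ones, i.e.\ it can create spurious occurrences. The theorem at this point in the paper is proved only under the standing assumption that all elements are distinct, and removing that assumption requires the genuinely different machinery of Section~\ref{section:generalization} (the modified signature, heaviest chain in the plane, and Lemma~\ref{lemma:reduction2}), not a one-line perturbation. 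Aside from that overreach, your argument is correct and identical in substance to the paper's.
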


\begin{proof}
First we focus on the special case when $n\leq 2m$. We iterate over all possible starting positions $i$ in the text while maintaining the signature $S(t_i,\ldots,t_{i+m-1})$ of the current fragment using Lemma~\ref{lemma:prune}. If the Hamming distance between $S(t_i,\ldots,t_{i+m-1})$ and $S(p_1,\ldots,p_m)$ exceeds $3k$, which can be detected in time $\bigo(k+\log\log m)$, by Lemma~\ref{lemma:implication} the current $i$ cannot correspond to a match, and we continue. Otherwise we generate at most $3k$ mismatches,
and apply Lemma~\ref{lemma:reduction} to reduce in time $\bigo(k\log\log k)$ checking if $(t_i,\ldots,t_{i+m-1})\kmismatches (p_1,\ldots,p_m)$ to computing the heaviest increasing subsequence on at most $3(k+1)$ elements. This, by Lemma~\ref{lemma:heaviest}, can be done in time $\bigo(k\log\log k)$, too. We get that the total complexity for a single $i$ is
$\bigo(\log\log m+k\log\log k)$. We spend $\bigo(m\log\log m)$ to initialize the structure from Lemma~\ref{lemma:prune}, so the total time is $\bigo(m(\log\log m+k\log\log k))$.

Finally, by cutting the input into overlapping fragments of length $2m$ and using the above method on each of them, a text of length $n$ can be processed in time $\bigo(\lceil\frac{n}{m}\rceil m(\log\log m+k\log\log k))=\bigo(n(\log\log m+k\log\log k))$.
\qed
\end{proof}

\section{Allowing repeated elements}
\label{section:generalization}

In this section we show how to generalize the algorithm as to remove the restriction that the text (and the pattern) has no repeated elements.
A simple fix could be that instead of comparing numbers, we compare pairs consisting of the number and its position. This might create new occurrences, though.
We will carefully modify all ingredients of the solution described in the previous section to deal with possible equalities. The time complexity will stay the same.

We start with Lemma~\ref{lemma:subseq}. Now the condition becomes that for all $j=1,2,\ldots,m-k-1$, either $a_{i_j} < a_{i_{j+1}}$ and $b_{i_j} < b_{i_{j+1}}$,
or $a_{i_j} = a_{i_{j+1}}$ and $b_{i_j} = b_{i_{j+1}}$. Then checking whether two sequences are $k$-isomorphic seems more complicated, but in fact it is not so. 
We can reduce the question to a generalized heaviest increasing subsequence problem, which we call the \emph{heaviest chain in a plane}.
In this generalization we are given a collection of $\ell$ weighted points in a plane, and the goal is to find the chain with the largest total weight, where a chain is a
set $S$ of points such that for any $(x,y),(x',y')\in S$ either $x<x'$ and $y<y'$, or $x>x'$ and $y>y'$, or $x=x'$ and $y=y'$. This can be solved in $\bigo(\ell\log\log\ell)$ time
similarly as in Lemma~\ref{lemma:heaviest}.

%Pawel: I'm not sure about the word 'chain'; we do use chains already in the proof of Lemma 6.
\begin{lemma}
\label{lemma:heaviest chain}
Given $\ell$ weighted points in a plane, we can compute their heaviest chain in time $\bigo(\ell\log\log\ell)$.
\end{lemma}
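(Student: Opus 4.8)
The plan is to mimic the proof of Lemma~\ref{lemma:heaviest} almost verbatim, upgrading the one‑dimensional dynamic structure to a two‑dimensional one that copes with ties on both coordinates. First I would reduce the $x$‑coordinates: sort the $\ell$ points by $x$ and relabel so that $x_i\in\{1,\ldots,\ell\}$, but — and this is the only real difference from Lemma~\ref{lemma:heaviest} — points sharing the same $x$ get the \emph{same} label. Symmetrically, relabel the $y$‑coordinates to $\{1,\ldots,\ell\}$ with ties preserved. Both relabellings cost $\bigo(\ell\log\log\ell)$ by sorting. Now process the points in order of increasing $x$, and within a block of equal $x$, in order of increasing $y$. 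For a point $p=(x,y)$ of weight $w$, let $r_p$ be the weight of the heaviest chain ending at $p$. If we order the blocks $B_1,B_2,\ldots$ by $x$‑value, then a chain ending at $p\in B_t$ either has its previous point in an \emph{earlier} block $B_{t'}$ with $t'<t$ (and then that point has strictly smaller $y$), or consists entirely of points of $B_t$ that all share the coordinates $(x,y)$ — but since all points of a block have the same $x$, a chain using two of them would need equal $y$ as well, so we simply merge points with identical coordinates into one of combined weight before starting. Hence $r_p = w + \max\{\,r_q : q \text{ in a strictly earlier block}, y_q < y_p\,\}$ (with $\max\emptyset = 0$).

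The subtlety is the word ``strictly earlier block'': when I finish a block $B_t$ I must not let its own points influence each other through the structure, so I compute $r_p$ for every $p\in B_t$ first, querying the structure as it stood \emph{before} $B_t$, and only afterwards insert all the new values. This is exactly the standard trick for handling ties in a sweep. The dynamic structure I need is the same as in Lemma~\ref{lemma:heaviest}: it stores an array $v_1,\ldots,v_\ell$ indexed by $y$‑label, all initially $-\infty$, supports ``increase $v_k$'' and ``return $\max\{v_1,\ldots,v_{k-1}\}$'', each in amortized $\bigo(\log\log\ell)$ time via the van Emde Boas tree that keeps only the strictly increasing subsequence of prefix maxima, precisely as described there. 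Querying at $k-1$ rather than $k$ realises the strict inequality $y_q<y_p$; ties in $y$ within a block are harmless because, again, we batch the updates.

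So the algorithm is: relabel both coordinates with ties preserved; merge coincident points; sort the points lexicographically by $(x,y)$; sweep block by block, for each block first answering all prefix‑max queries then performing all updates; keep a running maximum over all $r_p$; finally reconstruct the chain by storing a back‑pointer with each $r_p$. The total work is $\bigo(\ell\log\log\ell)$ for the sorting and relabelling plus $\bigo(\ell)$ structure operations each costing amortized $\bigo(\log\log\ell)$, giving $\bigo(\ell\log\log\ell)$ overall, as claimed. I expect no genuine obstacle here — the heart of the matter is simply the observation that the only new behaviour compared to Lemma~\ref{lemma:heaviest} is equal coordinates, and that is absorbed by (i) merging exactly‑coincident points and (ii) processing each equal‑$x$ block in a query‑then‑update phase — so the main thing to get right in the write‑up is arguing carefully that this batching correctly enforces ``$x_q<x_p$ and $y_q<y_p$, or $q=p$'' and nothing weaker.
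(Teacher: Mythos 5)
Your proof is correct, and it reaches the bound by a route that differs from the paper's. The paper reduces heaviest chain to the heaviest increasing subsequence of Lemma~\ref{lemma:heaviest}: after collapsing coincident points it replaces each coordinate by a lexicographically compared pair, $x'_i=(x_i,y_i)$ and $y'_i=(y_i,x_i)$, so that all coordinates of distinct points become distinct and the one-dimensional lemma can be invoked as a black box. You instead keep the two-dimensional formulation and rerun the sweep of Lemma~\ref{lemma:heaviest} directly, reusing only its van Emde Boas prefix-maximum structure, with ties on the $x$-coordinate neutralized by processing each equal-$x$ block in a query-then-update phase and ties on the $y$-coordinate neutralized by querying strictly below the label of $y_p$. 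Both routes cost $\bigo(\ell\log\log\ell)$. What your version buys is an explicit and easily checkable treatment of the one delicate case, namely two points that agree on exactly one coordinate and must therefore be excluded from every chain: your batching forbids same-$x$ pairs and the strict prefix query forbids same-$y$ pairs. The paper's reduction has to encode the same exclusions in the orientation of the secondary sort keys, and with both secondary keys taken in the same direction (as stated) two points with equal $x$ and distinct $y$ satisfy $x'_i<x'_j$ and $y'_i<y'_j$ simultaneously, so they would be admitted into a chain of the transformed instance even though they are forbidden in the original; one of the two secondary keys must be reversed for that reduction to be faithful. Your direct sweep sidesteps this subtlety entirely. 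The only nitpicks are cosmetic: reconcile $\max\emptyset=0$ with the $-\infty$ initialization (harmless here, since in the intended application all weights are positive path lengths), and perhaps remark that the order of points inside a block is immaterial once queries and updates are batched.
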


\begin{proof}
Given $\ell$ weighted points in a plane, we reduce the problem of finding the heaviest chain to computing the heaviest increasing subsequence. First, we make sure that the points
are unique by collapsing all duplicates into single points with the weight equal to the sum of the weights of the collapsed points. Let the $i$-th point be $p_i=(x_i,y_i)$ and define
$x'_i = (x_i, y_i)$ and $y'_i = (y_i, x_i)$.
Even though all $x'_i$ and $y'_i$ are pairs of numbers instead of just numbers, we can still consider the question of computing the heaviest chain for the new set of points 
$p'_i=(x'_i,y'_i)$ if we compare the pairs using the standard lexicographical ordering. Observe that $p'_i$ and $p'_j$ can be in the same chain in the original instance iff $p_i$ 
and $p_j$ can be in the same chain in the new instance, because $x'_i \leq x'_j$ and $y'_i \leq y'_j$ iff $x_i \leq x_j$ and $y_i \leq y_j$. However, in the new instance
we additionally have the property that $x'_i=x'_j$ iff $p_i=p_j$, and symmetrically $y'_i=y'_j$ iff $p_i=p_j$. Since we made sure that the points in the original instance are unique, $x'_i=x'_j$ iff $i=j$, and $y'_i=y'_j$ iff $i=j$. Hence we can reorder the points in the new instance so that their first coordinates are strictly increasing, and then if we look
at the sequence of their second coordinates, its increasing subsequence corresponds to a chain (technically, we also need to normalize the second coordinates by sorting, so that they are numbers instead of pairs of numbers). Hence we can find the heaviest chain in $\bigo(\ell\log\log\ell)$ time using Lemma~\ref{lemma:heaviest}.
\qed
\end{proof}

Now to check if two sequences are $k$-isomorphic, for each $i$ we either create a new point $(a_i,b_i)$ with weight $1$, or if such point already
exists, we increase its weight by $1$. Then we check if the weight of the heaviest chain is at least $n-k$.

%Pawel: Proposition 1 carries over with the same constant?

We have to modify the definition of the signature. Recall that $S(a_1,\ldots,a_m)$ was defined as a new sequences $(1-\pred(1),\ldots,m-\pred(m))$. Now the predecessor of $a_i$
might be not unique, hence the sequence will consist of pairs from $\{<,=\} \times \mathcal{Z}$. If a given $a_i$ is the rightmost occurrence of the corresponding number,
we output the pair $(<,i-\pred(i))$, where $\pred(i)$ is the position of the predecessor of $(a_i,i)$ in $\{(a_1,1),\ldots,(a_m,m)\}$ (if there is no such
predecessor, $0$), where the pairs are compared using the standard lexicographic ordering. If $a_i$ is not the rightmost occurrence of the corresponding number, we output $(=, i-\pred(i))$.
For such modified definition Lemma~\ref{lemma:implication} still holds. It also holds that $S(t_i,\ldots,t_{i+m-1})$ differs from $S(t_{i+1},\ldots,t_{i+m})$ on a constant
number of positions, hence the time bounds from Lemma~\ref{lemma:prune} remain the same.

Now the only remaining part is to show how to generalize Lemma~\ref{lemma:reduction}. First of all, given that the numbers can repeat, it is not clear what the rank of $b_i$ in
$\{b_1,\ldots,b_m\}$ exactly is. We define it as the number of all elements smaller than $b_i$, and also define \emph{the equal-rank} of $b_i$
to be the number of equal elements on its left. % Also, the reduction should be to computing the heaviest chain.

\begin{figure*}[t]
\begin{center}
\includegraphics[width=0.5\linewidth]{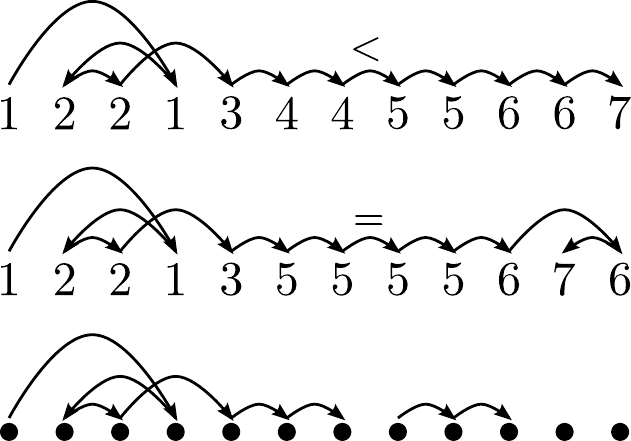}
\hfill
\includegraphics[width=0.4\linewidth]{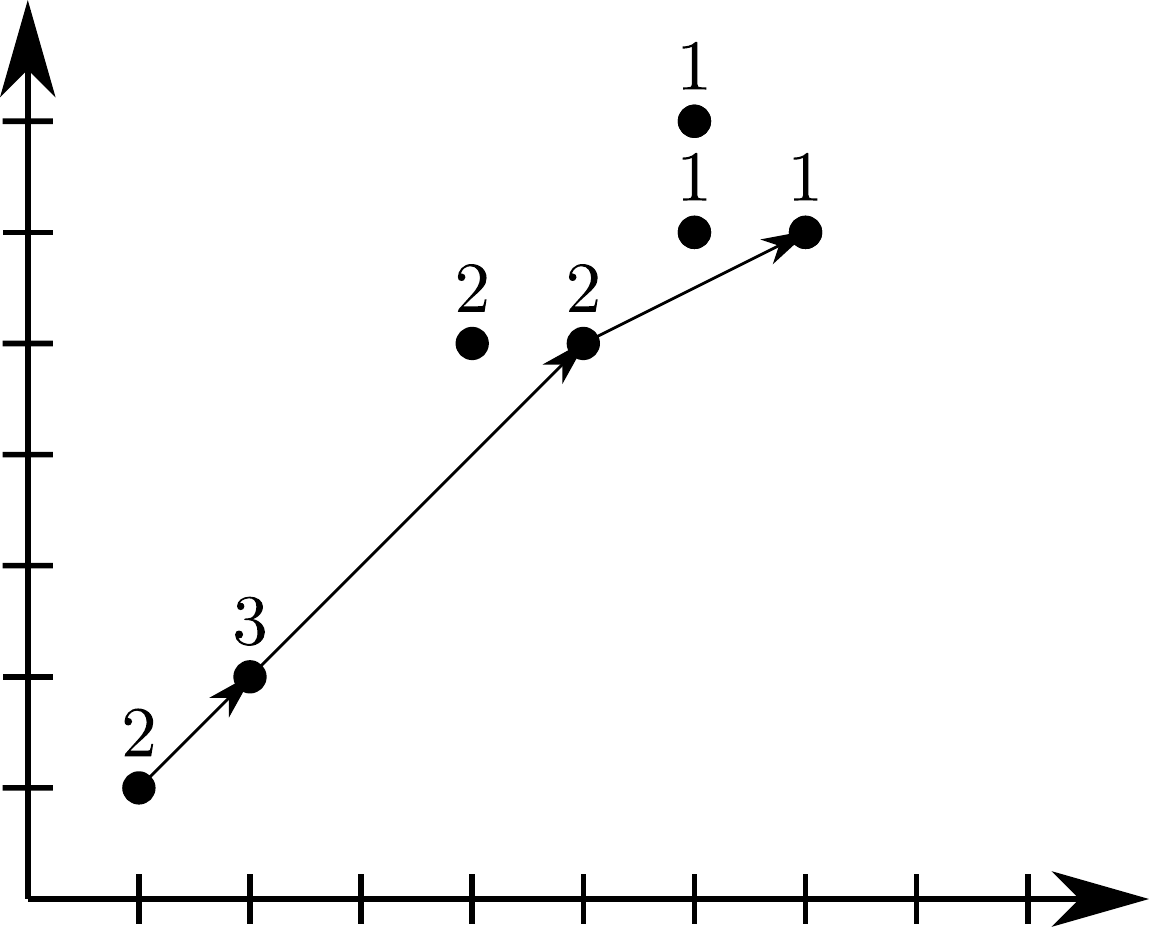}
\end{center}
\caption{Two sequences with their corresponding signatures, which differ on $3$ positions, so there are $4$ maximal paths. The resulting instance of heaviest chain is on the right.
}
\label{fig:eq_paths}
\end{figure*}

\begin{lemma}
\label{lemma:reduction2}
Assuming random access to $(a_{1},\ldots,a_{m})$, the sorting permutation $\pi_{b}$ of $((b_{1},1)\ldots,(b_{m}, m))$, and the rank, the equal-rank,
and total number of repetitions for every $b_i$,
and given $\ell$ positions where $S(a_{1},\ldots,a_{m})$ and $S(b_{1},\ldots,b_{m})$ differ, we can reduce in $\bigo(\ell\log\log\ell)$ time checking if 
$(a_{1},\ldots,a_{m}) \kmismatches (b_{1},\ldots,b_{m})$ to computing the heaviest chain on at most $3(\ell+1)$ elements.
\end{lemma}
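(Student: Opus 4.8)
The plan is to reproduce the proof of Lemma~\ref{lemma:reduction} with two modifications: the underlying order relation now has to account for equalities, and the reduction target is the heaviest chain in a plane (Lemma~\ref{lemma:heaviest chain}) rather than the heaviest increasing subsequence. As before, at every position $i$ where the (modified) signatures of $(a_{1},\ldots,a_{m})$ and $(b_{1},\ldots,b_{m})$ agree, the lexicographic predecessors of $(a_{i},i)$ in $\{(a_{1},1),\ldots,(a_{m},m)\}$ and of $(b_{i},i)$ in $\{(b_{1},1),\ldots,(b_{m},m)\}$ occur at one common position $j$, and we write $j\rightarrow i$; no arrow enters a position where the signatures differ, nor the globally smallest position, which we again make common to both sequences by prepending $-\infty$. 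This partitions all positions into at most $\ell+1$ maximal paths, each starting at one of $d_{0},d_{1},\ldots,d_{\ell}$, exactly as in Lemma~\ref{lemma:reduction}.

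The new phenomenon is that a path need no longer be used entirely or not at all. Classify each arrow $j_{t}\rightarrow j_{t+1}$ of a path by whether $a_{j_{t}}$ is equal to or strictly below $a_{j_{t+1}}$ and, independently, by the analogous relation for $b$. The two \emph{matched} types, $(=,=)$ and $(<,<)$, behave exactly as in the non-repeated setting; a \emph{mixed} arrow, $(=,<)$ or $(<,=)$, joins two positions that cannot lie on any common chain. The first thing I would prove is that a mixed arrow can only be the first arrow of its path: if $j_{t}\rightarrow j_{t+1}$ is mixed then the signatures must differ at $j_{t}$ (in the $(=,<)$ case $a_{j_{t}}$ is not a rightmost occurrence of its value while $b_{j_{t}}$ is, and symmetrically), so $j_{t}$ is a path start, hence $j_{t}=j_{1}$. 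Removing the at most one mixed arrow and then splitting off the maximal same-value block lying on its far side therefore breaks each path into at most three segments, and I would check that each segment on its own is a valid chain; so the argument of Lemma~\ref{lemma:reduction} applies to segments, and an optimal selection $i_{1},\ldots,i_{s}$ may be assumed to take each segment entirely or not at all.

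It then remains to encode the choice of segments as a heaviest chain instance. For every segment I would create one weighted point whose weight is the segment's length and whose coordinates are the rank of $a$ and of $b$ at the segment's representative ($\rightarrow$-minimal) element; using plain ranks makes two coordinates equal exactly when the two values are equal, and strictly ordered exactly when the values are, so the lexicographic domination used by Lemma~\ref{lemma:heaviest chain} coincides with the two segments being combinable into one chain. In particular, for a path carrying a mixed arrow the two short segments, the isolated endpoint and the same-value block on the other side, share exactly one coordinate, so no chain contains both, which is exactly what we want since at most one of them may accompany the long tail; and after collapsing coincident points, as Lemma~\ref{lemma:heaviest chain} does, equal value pairs get their weights summed, again exactly what we want. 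Hence the heaviest chain in this instance of $\bigo(\ell)$ points has weight equal to the largest $s$ for which a valid $i_{1},\ldots,i_{s}$ exists, and it is at least $m-k$ iff $(a_{1},\ldots,a_{m})\kmismatches(b_{1},\ldots,b_{m})$.

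For the running time I would argue as in Lemma~\ref{lemma:reduction}: we must recover the lengths of the at most $3(\ell+1)$ segments without materializing the paths. Given random access to $(a_{1},\ldots,a_{m})$, the sorting permutation $\pi_{b}$, and the ranks, equal-ranks, and repetition counts of the $b_{i}$, sorting the $\bigo(\ell)$ representative elements lets us locate, for each path start, the next representative, and hence the path length, as well as identify the same-value block produced by a mixed arrow; splitting a length into its three pieces is then a constant-time computation from the supplied statistics. This costs $\bigo(\ell\log\log\ell)$, building the point set costs $\bigo(\ell)$ more, and Lemma~\ref{lemma:heaviest chain} finishes in $\bigo(\ell\log\log\ell)$. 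The step I expect to be the main obstacle is the structural one, showing that mixed arrows occur only at path starts and that three segments always suffice, together with verifying that the planar encoding forbids exactly the forbidden combinations of segments; once that is done the rest is a routine adaptation of Lemma~\ref{lemma:reduction}, the only minor nuisance being that the full order statistics are given for $b$ but only random access for $a$, so the auxiliary computations must be phrased through the $b$-side data.
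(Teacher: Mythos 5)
Your path/segment/heaviest-chain pipeline matches the paper's in outline, but the splitting rule you use is the wrong one, and the reduction it produces is unsound. You split a path only when its \emph{first} arrow is ``mixed''. The obstruction the paper actually has to deal with is different and is not caught by this test: two \emph{distinct} paths, neither containing any mixed arrow, can collide because the final equal-value run of one and the initial run of the other carry the same value in $a$ but different values in $b$ (or vice versa). Since you then keep each such path as a single point placed at its $\rightarrow$-minimal element, whose coordinates may be strictly below the other path's representative in both ranks, the heaviest chain happily takes both whole paths even though the underlying index set violates the ``$a_u=a_v$ iff $b_u=b_v$'' requirement. Concretely, take $a=(1,4,4,4,9)$ and $b=(1,2,3,3,2)$. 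The modified signatures agree at positions $1,2,4$ and differ at $3$ and $5$; your lexicographic-predecessor arrows give the paths $1\rightarrow 2$, $3\rightarrow 4$ and $\{5\}$, with arrow types $(<,<)$ and $(=,=)$, so no path is split. The resulting points are $(0,0)$ with weight $2$, $(1,3)$ with weight $2$ and $(4,1)$ with weight $1$, and the heaviest chain has weight $4$, corresponding to the index set $\{1,2,3,4\}$ --- which is invalid because $a_2=a_3$ while $b_2\neq b_3$. The true optimum is $3$ (the set $\{1,3,4\}$), so your reduction would wrongly report $(a_1,\ldots,a_5)\mismatches{1}(b_1,\ldots,b_5)$ when in fact two mismatches are needed. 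The same example also breaks your ``each segment is all-or-nothing'' claim: the insertion argument of Lemma~\ref{lemma:reduction} fails when the element being inserted is tied in one sequence with an element already chosen from another segment, and you never establish that this cannot happen for whole paths.

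The paper's fix is stronger than yours and is applied unconditionally: \emph{every} path is cut into three parts --- the maximal prefix consisting of $\stackrel{=}{\rightarrow}$ arrows, the middle, and the maximal suffix of $\stackrel{=}{\rightarrow}$ arrows --- which still gives at most $3(\ell+1)$ points. The key structural lemma is then that for every element of a middle part, its entire equal-value class (in both sequences, and these two classes coincide) lies inside that same middle part; hence all equality collisions are confined to the boundary runs, each boundary run is a single weighted point carrying its own $(a,b)$-value pair, and every collision becomes visible as two points sharing exactly one coordinate, which the heaviest-chain instance correctly forbids. In your example this splits $1\rightarrow 2$ so that position $2$ becomes its own point at $a$-rank $\rank_a(4)$, tied in the first coordinate with the point for $\{3,4\}$, killing the spurious chain. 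Your mixed-arrow observation is not wrong --- under your arrow convention it correctly handles the within-path collisions --- but it addresses only a special case of the boundary-run problem, and the general case is exactly what forces the paper's prefix/middle/suffix decomposition.
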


\begin{proof}
For any position $i$ where the signatures are the same, either both $a_i$ and $b_i$ are not the rightmost occurrence of the corresponding number, and their next occurrences
are at the same position $j$ in both sequences, denoted $j\stackrel{=}{\rightarrow} i$, or both $a_i$ and $b_i$ are the rightmost occurrence of the corresponding number, and 
the leftmost occurrences of their predecessors are at the same position $j$ in both sequences, denoted $j\stackrel{<}{\rightarrow} i$.
In both cases, we denote the situation by $j\rightarrow i$, and consider the unique partition of the set
of all positions into maximal paths. Now we would like to say that for each such path $i$, either none of its elements belong to the optimal solution, or all of them are
there, where the solution is a collection of indices $i_1,\ldots,i_{m-k}$ such that for all $j=1,2,\ldots,m-k+1$ either $a_{i_j}=a_{i_{j+1}}$ and 
$b_{i_j}=b_{i_{j+1}}$, or $a_{i_j}<a_{i_{j+1}}$ and $b_{i_j}<b_{i_{j+1}}$.
Unfortunately, this is not true: one path might end at some $i$, and the other might start at some $j$, such that $a_i=a_j$, yet $b_i\neq b_j$. Then we cannot have
these two whole paths in the solution, but it might pay off to have a prefix of the former, or a suffix of the latter, see Fig.~\ref{fig:eq_paths}.
Our fix is to additionally split every path into three parts. The parts correspond to the maximal prefix $I_{pref}$ of the form $i_1\stackrel{=}{\rightarrow}i_2\stackrel{=}{\rightarrow}\ldots$,
the middle part $I_{middle}$, and the maximal suffix $I_{suf}$ of the form $\ldots\stackrel{=}{\rightarrow}i_{k-1}\stackrel{=}{\rightarrow}i_k$
%(it might be the case that the whole $I$ is of such form, and then we don't split it).
The splitting can be performed efficiently using the ranks, the equal-ranks, and the total number of repetitions. Then we create an instance
of the heaviest chain problem by collapsing each path into a single weighted point (with the same coordinates as the first point on the path), and additionally replacing identical points with one (and summing up their weights).

Now we need to prove that solving this instance gives us an optimal solution to the original question.
Clearly, if $j\stackrel{=}{\rightarrow} i$ then the optimal solution takes both $i$ and $j$ or none of them, hence merging identical points preserves the optimal solution. We must show 
that for each chain $I$ decomposed into $I_{pref}\cup I_{middle}\cup I_{suf}$ the optimal solution contains either all points from $I_{middle}$ or none of them.
Let $I_{pref} = \ldots \stackrel{=}{\rightarrow}i'$, $I_{middle}=i \rightarrow \ldots \rightarrow j$, and $I_{suf} = j'\stackrel{=}{\rightarrow}\ldots$. 
We know that $a_{i'}<a_{i}$ and $b_{i'}<b_{i}$, and also $a_{j}<a_{j'}$ and $b_{j}<b_{j'}$. Hence for all $k\in I_{middle}$, all positions $k'$ such that $a_k=a_{k'}$ belong
to $I_{middle}$, and similarly all positions $k'$ such that $b_k=b_{k'}$ belong to $I_{middle}$. Furthermore, these two sets of positions are the same.
It follows that after collapsing identical points we get that for every $k\in I_{middle}$, there are no $k'$ (inside or outside $I_{middle}$) such that $a_k=a_{k'}$ or
$b_k=b_{k'}$. A straightforward modification of the two step proof used in Lemma~\ref{lemma:reduction} can be used to show that if one element from $I_{middle}$ belongs to the
optimal solution, then all its elements are there.
\qed
\end{proof}

\section{Conclusions}

Recall that the complexity of our solution is $\bigo(n(\log\log m+k\log\log k))$.
Given that it is straightforward to prove a lower bound of
$\Omega(n+m\log m)$ in the comparison model, and that for $k=0$ one can achieve $\bigo(n+\text{sort}(m))$ time~\cite{KubicaOrder}, a natural question is whether achieving $\bigo(nf(k))+\bigo(m\polylogws(m))$ time is possible.
Finally, even though the version with $k$ errors seems hard (see the introduction), there might be an $\bigo(nf(k))$ time algorithm, with $f(k)$ being an exponential function.
%Pawel: talk about the lower bound?

\bibliographystyle{splncs03}
\bibliography{biblio}

\begin{thebibliography}{10}
\providecommand{\url}[1]{\texttt{#1}}
\providecommand{\urlprefix}{URL }

\bibitem{Vialette}
Belazzougui, D., Pierrot, A., Raffinot, M., Vialette, S.: Single and multiple
  consecutive permutation motif search. In: Cai, L., Cheng, S.W., Lam, T.W.
  (eds.) Algorithms and Computation, Lecture Notes in Computer Science, vol.
  8283, pp. 66--77. Springer Berlin Heidelberg (2013)

\bibitem{LCA}
Bender, M.A., Farach-Colton, M.: The {LCA} problem revisited. In: Proceedings
  of the 4th Latin American Symposium on Theoretical Informatics. pp. 88--94.
  LATIN '00, Springer-Verlag, London, UK, UK (2000)

\bibitem{Bose}
Bose, P., Buss, J.F., Lubiw, A.: Pattern matching for permutations. Inf.
  Process. Lett.  65(5),  277--283 (1998)

\bibitem{CrochemoreOrder}
Crochemore, M., Iliopoulos, C., Kociumaka, T., Kubica, M., Langiu, A., Pissis,
  S., Radoszewski, J., Rytter, W., Waleń, T.: Order-preserving incomplete
  suffix trees and order-preserving indexes. In: Kurland, O., Lewenstein, M.,
  Porat, E. (eds.) String Processing and Information Retrieval, Lecture Notes
  in Computer Science, vol. 8214, pp. 84--95. Springer International Publishing
  (2013)

\bibitem{Han}
Han, Y.: Deterministic sorting in ${O}(n\log\log n)$ time and linear space. In:
  Proceedings of the thiry-fourth annual ACM symposium on Theory of computing.
  pp. 602--608. STOC '02, ACM, New York, NY, USA (2002)

\bibitem{MosheParam}
Hazay, C., Lewenstein, M., Sokol, D.: Approximate parameterized matching. ACM
  Transactions on Algorithms  3(3) (2007)

\bibitem{Hunt}
Hunt, J.W., Szymanski, T.G.: A fast algorithm for computing longest common
  subsequences. Commun. ACM  20(5),  350--353 (May 1977)

\bibitem{SuffixArray}
K\"{a}rkk\"{a}inen, J., Sanders, P., Burkhardt, S.: Linear work suffix array
  construction. J. ACM  53(6),  918--936 (Nov 2006)

\bibitem{Kim}
Kim, J., Eades, P., Fleischer, R., Hong, S.H., Iliopoulos, C.S., Park, K.,
  Puglisi, S.J., Tokuyama, T.: Order preserving matching. CoRR  abs/1302.4064
  (2013)

\bibitem{KubicaOrder}
Kubica, M., Kulczyński, T., Radoszewski, J., Rytter, W., Waleń, T.: A linear
  time algorithm for consecutive permutation pattern matching. Inf. Process.
  Lett.  113(12),  430--433 (2013)

\end{thebibliography}

%\putbib[biblio]
%\end{bibunit}

\end{document}